\newtheorem{theorem}{Theorem}[section]
\newtheorem{lemma}[theorem]{Lemma}
\newtheorem{definition}[theorem]{Definition}
\newtheorem{remark}[theorem]{Remark}
\definecolor{tab10_blue}{rgb}{0.121, 0.466, 0.705}
\definecolor{tab10_orange}{rgb}{1.0,   0.498, 0.054}
\definecolor{tab10_green}{rgb}{0.172, 0.627, 0.172}
\definecolor{tab10_red}{rgb}{0.839, 0.152, 0.156}
\definecolor{tab10_purple}{rgb}{0.580, 0.403, 0.741}
\definecolor{tab10_brown}{rgb}{0.549, 0.337, 0.294}
\definecolor{tab10_pink}{rgb}{0.890, 0.466, 0.760}
\definecolor{tab10_gray}{rgb}{0.498, 0.498, 0.498}
\definecolor{tab10_olive}{rgb}{0.737, 0.741, 0.133}
\definecolor{tab10_cyan}{rgb}{0.090, 0.745, 0.811}
\definecolor{Gray}{gray}{0.85}
\newcommand{\speedup}[1]{{\color{gray}(\ifdim #1 pt > 0.3pt #1\else $< #1$\fi{}$\times$)}}
\definecolor{tab10_blue}{rgb}{0.121, 0.466, 0.705}
\definecolor{tab10_orange}{rgb}{1.0,   0.498, 0.054}
\definecolor{tab10_green}{rgb}{0.172, 0.627, 0.172}
\definecolor{tab10_red}{rgb}{0.839, 0.152, 0.156}
\definecolor{tab10_purple}{rgb}{0.580, 0.403, 0.741}
\definecolor{tab10_brown}{rgb}{0.549, 0.337, 0.294}
\definecolor{tab10_pink}{rgb}{0.890, 0.466, 0.760}
\definecolor{tab10_gray}{rgb}{0.498, 0.498, 0.498}
\definecolor{tab10_olive}{rgb}{0.737, 0.741, 0.133}
\definecolor{tab10_cyan}{rgb}{0.090, 0.745, 0.811}
\DeclarePairedDelimiterX{\lin}[2]{\langle}{\rangle}{#1, #2}
\DeclarePairedDelimiterX{\abs}[1]{\lvert}{\rvert}{#1}
\DeclarePairedDelimiterX{\norm}[1]{\lVert}{\rVert}{#1}
\DeclarePairedDelimiterX{\cbr}[1]{\{}{\}}{#1} 
\DeclarePairedDelimiterX{\rbr}[1]{(}{)}{#1} 
\DeclarePairedDelimiterX{\sbr}[1]{[}{]}{#1} 
\renewcommand{\epsilon}{\varepsilon}
\providecommand{\R}{\mathbb{R}} 
\DeclareMathOperator{\E}{\mathbb{E}}
\DeclareMathOperator{\sgn}{sign}
\def\sign{\@ifnextchar*{\@sgnargscaled}{\@ifnextchar[{\sgnargscaleas}{\@ifnextchar{\bgroup}{\@sgnarg}{\sgn} }}}
\def\@sgnarg#1{\sgn\rbr{#1}}
\def\@sgnargscaled#1{\sgn\rbr*{#1}}
\def\@sgnargscaleas[#1]#2{\sgn\rbr[#1]{#2}}
\DeclareMathOperator*{\argmin}{arg\,min}
\DeclareMathOperator*{\argmax}{arg\,max}
\providecommand{\vc}{\mathbf{c}}
\providecommand{\vm}{\mathbf{m}}
\providecommand{\vp}{\mathbf{p}}
\providecommand{\cC}{\mathcal{C}}
\providecommand{\cD}{\mathcal{D}}
\providecommand{\cH}{\mathcal{H}}
\providecommand{\cI}{\mathcal{I}}
\providecommand{\cM}{\mathcal{M}}
\newtheorem{example}[theorem]{Example}
\newcommand{\e}{\varepsilon}
\newcommand{\myitem}[1]{%
\item[\textbf{(#1)}]\protected@edef\@currentlabel{#1}%
}
\newcommand{\eg}{e.g.\xspace}
\newcommand{\ie}{i.e.\xspace}
\newcommand{\low}[1]{\underaccent{\bar}{#1}\xspace}
\newcommand{\high}[1]{\bar{#1}\xspace}
\begin{document}


\begin{center}

{\bf{\Large{Mechanisms that Incentivize Data Sharing in Federated Learning}}}

\vspace*{.25in}
{\large{
\begin{tabular}{c}
Sai Praneeth Karimireddy$^{\diamond}$\footnotemark \quad Wenshuo Guo$^{\diamond 1}$ \quad Michael I. Jordan$^{\diamond, \dagger}$\\
\end{tabular}
}}

\vspace*{.15in}
\begin{tabular}{c}
Department of Electrical Engineering and Computer Sciences$^\diamond$ \\
Department of Statistics$^\dagger$ \\ 
University of California, Berkeley
\end{tabular}
\footnotetext{Equal contribution.}
\vspace*{.1in}

\begin{abstract}
\noindent
    Federated learning is typically considered a beneficial technology which allows multiple agents to collaborate with each other, improve the accuracy of their models, and solve problems which are otherwise too data-intensive / expensive to be solved individually. However, under the expectation other agents will share their data, rational agents may be tempted to engage in detrimental behavior such as \emph{free-riding} where they contribute no data but still enjoy an improved model.
    In this work, we propose a framework to analyze the behavior of such rational data generators. We first show how a naive scheme leads to catastrophic levels of free-riding where the benefits of data sharing are completely eroded. Then, using ideas from contract theory, we introduce \emph{accuracy shaping} based mechanisms to maximize the amount of data generated by each agent. These provably prevent free-riding without needing any payment mechanism.
\end{abstract}

\end{center}



\section{Introduction}
Data is a \emph{non-rivalrous good}---once produced, it can be repeatedly used multiple times without exhaustion. Thus, multiple firms can simultaneously use the data produced by any individual firm, increasing societal utility/welfare~\citep{jones2020nonrivalry}. To promote such multiple usage, data portability requirements have been widely legislated, \eg, the GDPR in the EU, CCPA in California, etc~\citep{mancini2021data}. As a consequence, services are required to enable a user to download any personal data collected and potentially re-upload it to a different service. These desiderata form a solid economic and legal basis for federated learning---a new paradigm in machine learning wherein multiple data-generating agents collaborate with each other to train a model on their \emph{combined} data so that all the agents end up with a better model than they would have obtained on their own~\citep{kairouz2021advances}. Such collaborative data sharing is already common in genomics research~\citep{weinstein2013cancer}, internet advertisement targeting~\citep{google_adsdatahub}, and is also gaining traction between networks of hospitals~\citep[see, e.g.,][]{sheller2018multi,wen2019federated,rieke2020future,flores2021federated}.

It is clear that once a certain amount of data has been produced, privacy issues aside, societal welfare is maximized by allowing free access to the data thereby making it a public good. However, under such an expectation, a rational agent may be tempted to \emph{free-ride}, \ie, consume the benefits of the data production by others without contributing any data themselves. This may lead to a collapse in the data generation with everyone wanting to free-ride. Such a problem inevitably arises with any public good~\citep{baumol2004welfare}. Further, even if no agent actually free-rides and everyone intends to contribute data out of altruism, the mere perception that others may be free-riding reduces pro-social behavior and willingness to contribute~\citep{choi2019contributors}. Thus, the long-term success of federated learning in particular and data portability in general critically require overcoming free-riding.


The overall consequence of free-riding to the system is that a lesser amount of data may be generated. Thus, we can equivalently formulate the problem of preventing free-riding as that of maximizing the amount of data generated by the agents. This motivates our main question:
\begin{quote}
    \emph{How do we design a system which incentivizes rational agents to contribute their fair share of data, thereby maximizing the accuracy of the resulting model and improving collaboration?
    }
\end{quote}
Maximizing the amount of data generated will arguably lead to the greatest long-term societal welfare, even if allowing free access to it gives better short-term welfare. 
\paragraph{Motivating example.} Autonomous driving is data-intensive, with expensive data generation. Each data point involves a person physically driving. On the other hand, more data and more accurate models will potentially lead to reduced accidents and save lives. The high cost of data collection also means that only a select few providers will be able to raise capital required to collect enough data, limiting innovation.

Given that, a government agency may pass legislation requiring autonomous driving providers to share their data openly with each other with the following two goals: (i) hoping that each provider would now have access to a larger pool of data and can train a more accurate model, and (ii) forcing collaboration between the providers in order to encourage solving more ambitious problems. However, providers may react to such a data-sharing regulation by reducing the amount of data they collect and instead free-riding, defeating the purpose of the legislation. \emph{How should the government agency formulate its data-sharing legislation in order to maximize the total amount of data collected and shared?}

\subsection{{Contribution and summary of results}}


In this work, we formally introduce the \emph{data maximization incentivization problem} in federated learning, and design new mechanisms to achieve this goal. 
In more detail:
\begin{itemize}
    \item  We formulate a principal-agent model~\citep{laffont2009theory} where each agent has a cost associated with generating a data point and wants to improve the accuracy of a model while minimizing said costs (Sec.~\ref{sec:framework}). Our formulation borrows ideas from contract theory while introducing new concepts that are specific to the federated learning setting.
    \item Using this framework we show how giving unconditional benefit of the combined data to all agents (as is standard in federated learning) leads to catastrophic free-riding where almost none of the agents contribute any data (Sec.~\ref{sec:free_ride}) at their optimal responses.
    \item Accordingly, we propose to tune the accuracy of the model received by an agent to their contribution. In the full-information setting when the agent's cost of data generation is known, we derive an optimal mechanism which overcomes free-riding and leads to maximal collaboration and data generation (Sec.~\ref{sec:accuracy_shaping}). 
    \item Finally, if the costs of an agent are unknown, we show (in App.~\ref{sec:information_rent}) how to design truth-revealing accuracy curves at some cost to the principal (\ie, information rent) to incentivize the agents to report their true costs.
\end{itemize}
Our framework can capture free-riding and the need for collaboration when faced with challenging learning problems. The latter is novel to our framework---we show that if the learning task is too challenging, then it is not economically viable for any single agent to tackle the problem. However, using incentivizing data-sharing mechanisms, it may be possible to share the costs among participants and solve it collaboratively.

\subsection{Related Work}


\paragraph{Free-riding and fairness in federated learning.}
Recent work has explored such free-riding behavior in federated learning schemes, with various incentive models proposed~\citep{sarikaya2019motivating,lin2019free,ding2020incentive, fraboni2021free}. Most of this work has, however, focused on a taxonomy of free-rider attacks or the detection of such attacks, instead of proposing mechanisms that incentivize maximal data contribution. In this work, we consider a mechanism for data sharing under the standard federated learning setting such that rational agents are incentivized to contribute their maximal amount of data.

\paragraph{Contract theory in federated learning.} 
Preventing free-riding behavior in federated learning is notoriously hard because the data collection and costs are private to the agents. This information asymmetry and the existence of a central server (a ``principal") suggests connections to contract theory, which studies the design of incentive mechanisms under a principal-agent model, where the agent possesses private information about their costs. Recently, there has been an emerging line of research exploring the application of contract theory for federated learning~\citep{kang2019toward,kang2019incentivea,kang2019incentiveb, cong2020game, zhan2020learning,lim2021towards, tian2021contract}. In particular,~\citet{tian2021contract} proposed a contract-based aggregator under a multi-dimensional contract model over two possible types of agents and showed improved model generalization accuracy under that contract. However, their mechanism focused on eliciting the private type information instead of maximizing the data contribution. To the best of our knowledge, our work is the first to use contract theory for \textit{data maximization} in federated learning. Further, prior work has focused on how to design payments to agents, rather than the accuracy-shaping problem that we focus on here without any payment usage.

\paragraph{Mechanism design for collaborative machine learning.} More broadly, this work is related to an active line of research on mechanism design for collaborative machine learning, which involves multiple parties each with their own data, jointly training a model or making predictions in a common learning task~\citep{sim2020collaborative, xu2021gradient}. In collaborative machine learning, a major focus has been the design of model rewards (i.e., data valuation) in order to ensure certain fairness or accuracy objectives. Towards that goal, there has been model rewards proposed based on notions from the cooperative game theory literature such as the Shapley value~\citep{jia2019towards, wang2020principled}. However, the guarantees of these model rewards depend on the assumption that the agents are already willing to contribute the data they have. In this work, we study a different incentivization task for data maximization.

Lastly, in this work, we have focused on the data-maximization goal under individual rationality and accuracy-shaping. More broadly, there are other objectives which are of interest in federated learning, such as fairness and welfare objectives, that have been under active study~\citep{donahue2020model, donahue2021optimality}. 




\section{Modeling an Individual Agent}\label{sec:framework}

We begin with modelling the learning task and objective for an individual agent. We then provide a characterization of the optimal data contribution for each single agent without participating in a federated learning scheme. 

\subsection{Learning problem}

There are $n$ agents all of whom want to solve a \emph{common} learning problem. This is often true in federated learning since coalitions form around solving some particular task. Concretely, we assume that all agents want to maximize an accuracy function, $a(\cD): 2^\cD \rightarrow [0, 1]$, for a dataset $\cD$.
We also assume each of them has access to the same data distribution and that we are in an i.i.d.\ setting. This holds true if the data is generated by manually labelling a subset of an already public unlabelled dataset, as is common in machine learning; \eg, Cifar~\citep{krizhevsky2009cifar} and ImageNet~\citep{imagenet2015}. This assumption is arguably also valid in our autonomous driving example where each data point involves a random path taken under random external conditions. 
Thus, every agent wants to maximize 
\begin{equation}\label{eqn:acc}
a(m): \R_{\geq 0} \rightarrow [0, 1] = \max(0, b(m))\,, \ \text{where $b$ is continuous, non-decreasing and concave.}
\end{equation}
We also assume w.l.o.g that $a(0) = 0$ and $\lim_{m \rightarrow \infty}a(m) > 0$. We will first show how an agent concerned with obtaining high test accuracy on a learning problem can be modeled by our formulation.
are motivated by empirical observations~\citep{kaplan2020scaling}, and standard generalization guarantees.
\begin{example}[Generalization bounds]\label{example:ERM}
Suppose we want to learn a model $h$ from a hypothesis class $\cH$ which minimizes the error over data distribution $\cD$, defined to be $R(h) := \E_{(x,y) \sim \cD}[e(h(x), y)]$, for some error function $e(\cdot) \in [0, 1]$. Let such an optimal model have error $(1 - a_{opt}) < 1$. Now, given access to $\{(x_l, y_l)\}_{l \in [m]}$ which are $m$ i.i.d.\ samples from $\cD$, we can compute the empirical risk minimizer (ERM) as $\hat h_m = \argmin_{h \in \cH} \sum_{l \in [m]} e(l(x), y)$. Then, standard generalization bounds \cite[Theorem~11.8]{mohri2018foundations} imply that with probability at least $99\%$ over the sampling of the data, the accuracy is at least
\begin{equation}\label{eqn:generalization}
    1 - R(\hat h_m) \geq \cbr[\Big]{b(m) := a_{opt} - \frac{\sqrt{2k(2 + \log(m/k))} + 4}{\sqrt{m}}}\,.
\end{equation}
Here $k \geq 1$ is the pseudo-dimension of the set of functions $\{(x,y) \rightarrow e(h(x),y) : h \in \cH\}$, which is a measure of the \emph{difficulty} of the learning task. We can then define our accuracy function $a(m) = \max(0, b(m))$. Note that $a(m) \in [0,1]$ and $\lim_{m \rightarrow \infty} = a_{opt} > 0$. Further, $b(m) \leq 0$ for $m \leq 1$ and $b(\cdot)$ is concave for any $m \geq 1$, satisfying \eqref{eqn:acc}. While \eqref{eqn:generalization} does satisfy our assumptions and can be used to understand our framework, it is too unwieldy to perform exact analysis. So we will use \eqref{eqn:generalization} for simulations, but use a simplified expression for our analytic analysis:
\begin{equation}\label{eqn:simple-generalization}
    b(m) := a_{opt} - 2\sqrt{k/{m}}\,.
\end{equation}
\end{example}
We use the above formulation as a running example in the rest of the paper. The next example shows a more stylized, but perhaps more realistic setting where a company derives some abstract value from data but only after a cutoff due to regulation.

\begin{example}[Starting costs and minimum viability]
Consider an autonomous driving provider who is training large machine learning models. The dependence of the final accuracy $b(\cdot)$ of such a model typically scales as a power law in the number of training data points; i.e., $b(m) = 1 - \nicefrac{\beta}{m^\alpha}$ for some $\beta > 0$ and $\alpha \in (0,1]$~\citep{kaplan2020scaling}. For such a provider, improved accuracy in their models might directly imply better customer satisfaction and hence more sales. It might also imply lower downstream costs in term of law-suits, etc. However, a small number (say ten) data points are not in themselves of any value. This is because the true value needs to overcome additional fixed (non data-related) costs, as well as have sufficient accuracy to pass safety regulation. Thus, in practice, the real value of data is better captured by $a(m) = \max(0 , 1 - \nicefrac{\beta}{m^\alpha} - \tau)$ for accuracy threshold $\tau$. Thus, there is a minimum viable dataset $m^0 \geq \rbr*{\nicefrac{\beta}{1 - \tau}}^{\nicefrac{1}{\alpha}}$. The value of the data below this threshold is zero, and its value beyond is concave and non-decreasing.
\end{example}
\subsection{Agent's objective and optimal solution}

Each agent $i$ has a marginal fixed cost $c_i > 0$ for producing a data point. Their cost for producing a dataset $\cD$ with $m$ number of data points is then:
\begin{equation}
    cost_i(m) = c_i m. 
\end{equation}

\begin{figure*}[!t]
\centering
\begin{tabular}{cccc} 
\includegraphics[height=0.18\textwidth]{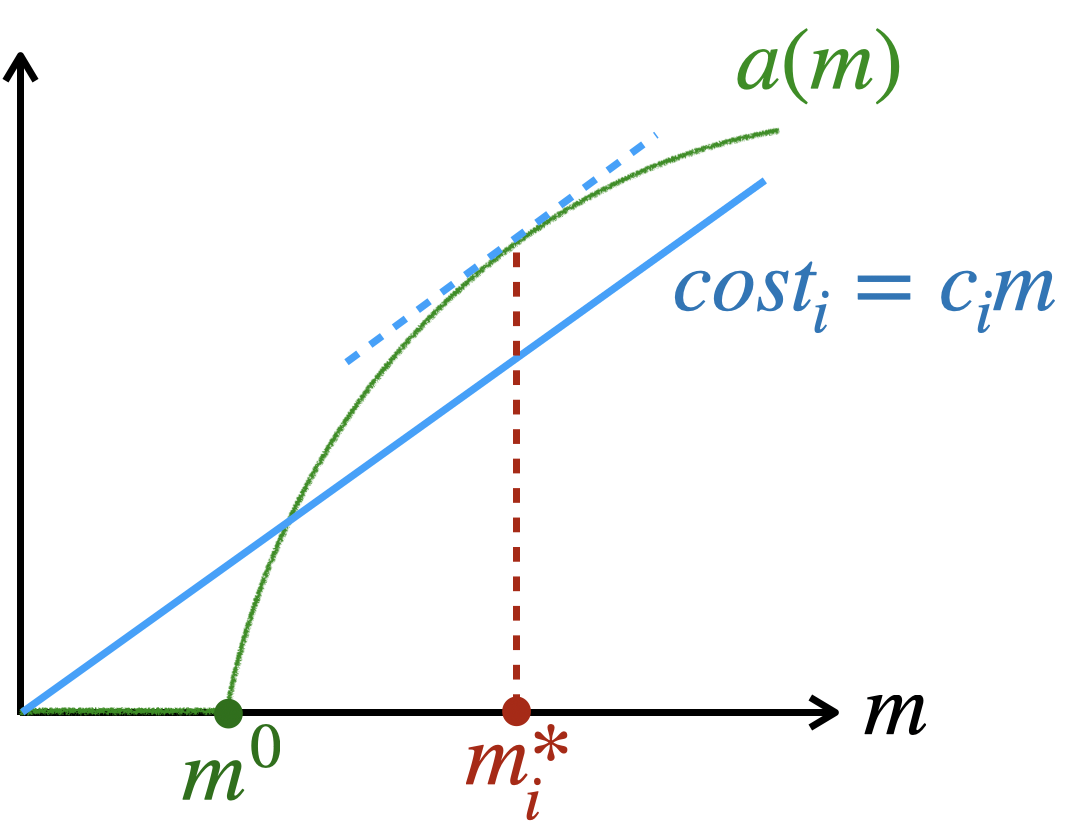} &
\includegraphics[height=0.17\textwidth]{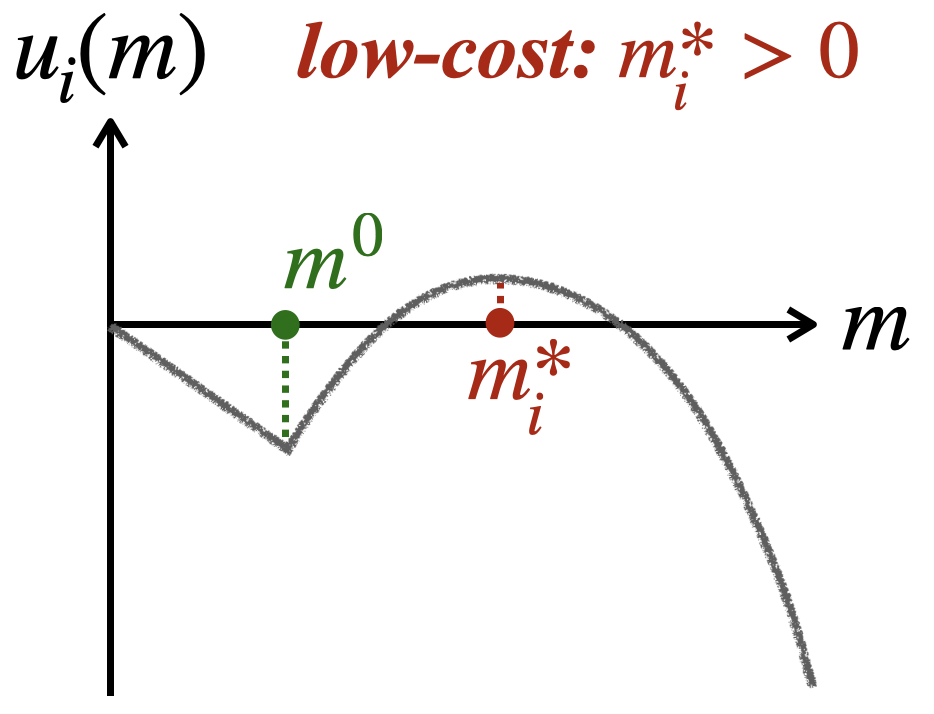} & 
\includegraphics[height=0.17\textwidth]{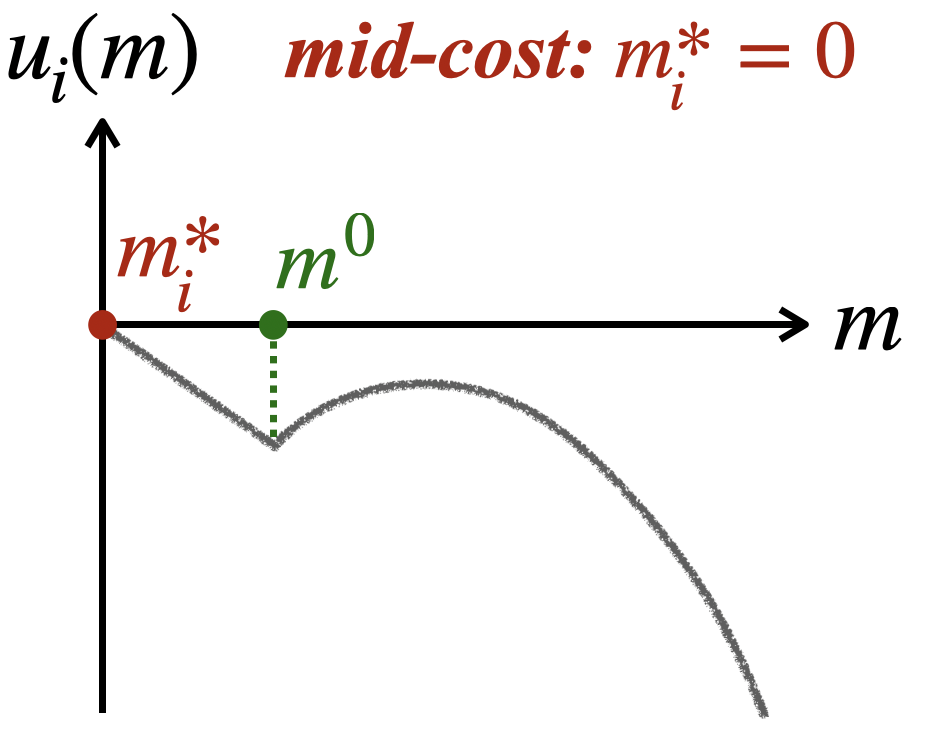} &
\includegraphics[height=0.17\textwidth]{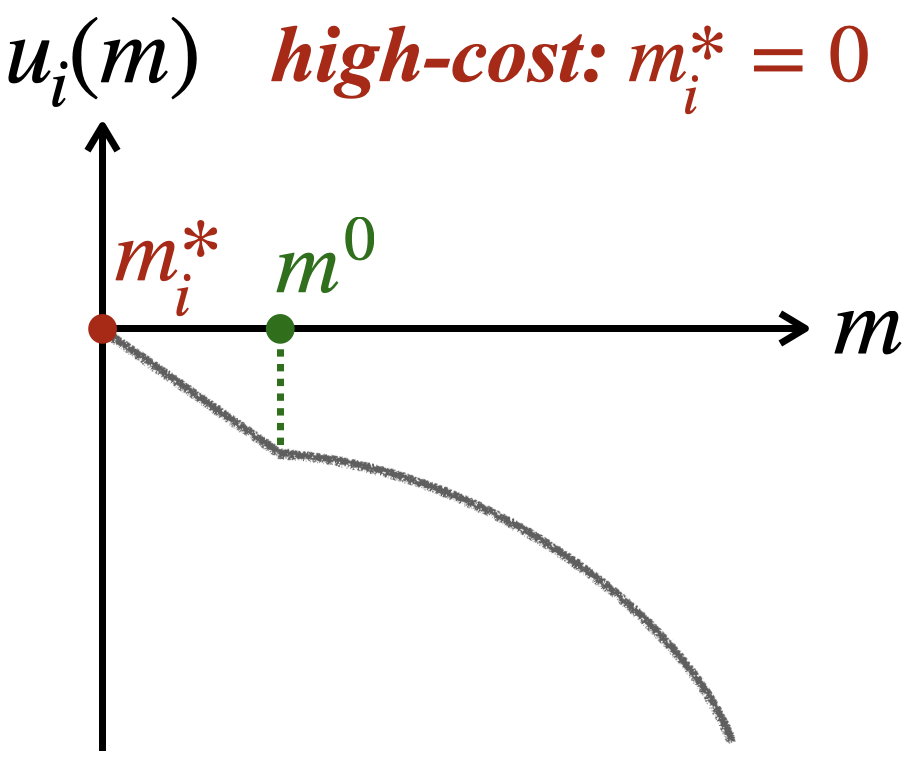}\\
\scriptsize{$(a)$} & \scriptsize{$(b)$} & \scriptsize{$(c)$} & \scriptsize{$(d)$}  \\
\end{tabular}
\caption{Illustration of the optimal amount of data for a single agent. \textit{(a):} Accuracy and cost versus the dataset size.\textit{(b)-(d):} Utility function of a low/mid/high-cost agent versus the dataset size.}\label{fig:single-agent} 
\end{figure*}
When manually labelling a dataset or when training an autonomous-driving model, this cost $c_i$ may represent the time spent by researchers/employees or an amount paid to crowd-sourced workers. The cost $c_im$ may also represent the risk associated with privacy loss for the agent for revealing $m$ of their data points. By incurring this cost, they can train a model with accuracy $a(m)$. We assume that the utility improves linearly with increasing accuracy $a(\cdot)$. For example, each accurate product recommendation may lead to a sale or correct digital ad placement may lead to a click and hence ad revenue. This is also true if each error represents costly consequences. Each error by a medical diagnostic model, a loan application evaluation model, or a autonomous driving model may lead to significant suffering. Thus, the utility of an agent is improve accuracy for the least cost; \ie, to maximize
\begin{equation}\label{eqn:utility}
u_i(m) = a(m) - c_i m.
\end{equation}


\begin{restatable}[Optimal individual generation]{theorem}{individualtheorem} \label{thm:ind}
Consider an individual agent $i$ with marginal cost per data point $c_i$ and accuracy function $a$ satisfying \eqref{eqn:acc} working on their own. Then, the optimal amount of data $m_i^\ast$ is:
\begin{equation}\label{eq:opt-single-agent}
 m_i^\ast =
    \begin{cases}
        0 &\text{if  ${\max_{m_i \geq 0} u_i(m_i)} \leq 0$};\\
        \alpha_i^\ast \text{, such that } b'(\alpha_i^\ast) = c_i &\text{otherwise.}
    \end{cases}
\end{equation}
Further, for agents $i,j $ with costs $c_i \leq c_j$, their utility satisfies $u_i(m_i^\ast) \geq u_j(m_j^\ast)$ and $m_i^\ast \geq m_j^\ast$.
\end{restatable}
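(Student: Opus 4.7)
The plan is to exploit the piecewise structure of the utility $u_i(m)=a(m)-c_i m$ induced by the kink in $a=\max(0,b)$. Let $m_0 := \inf\{m \ge 0 : b(m) \ge 0\}$ (so $m_0=0$ if $b(0) \ge 0$, and $m_0=+\infty$ if $b$ never crosses zero). On the ``zero-accuracy'' branch $[0,m_0]$ we have $u_i(m)=-c_i m$, linear and non-positive, with $u_i(0)=0$; on the ``positive-accuracy'' branch $[m_0,\infty)$ we have $u_i(m)=b(m)-c_i m$, which is concave by the assumptions on $b$ in \eqref{eqn:acc}. Hence any global maximizer of $u_i$ is either $m=0$ or the (unique) maximizer on the concave branch. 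This partition sidesteps the fact that $u_i$ is not globally concave because of the kink at $m_0$, which is the main technical nuisance to navigate.

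For the first part of the theorem, I would characterize the concave-branch maximizer by a first-order condition. Since $b$ is non-decreasing and concave, $b'$ is non-increasing; combined with $b$ bounded above by $1$ (so $b'(m)\to 0$) and $c_i>0$, there are only two possibilities. Either $b'$ crosses the level $c_i$ at a unique point $\alpha_i^\ast$ with $b'(\alpha_i^\ast)=c_i$, which is then the unconstrained maximizer of $b(m)-c_i m$, or $b'(m)<c_i$ for all $m\ge 0$, in which case $b(m)-c_i m$ is strictly decreasing and its supremum on the concave branch is at most $-c_i m_0 \le 0$. Comparing these two candidates to $u_i(0)=0$ delivers exactly the dichotomy in \eqref{eq:opt-single-agent}: if the best attainable utility is non-positive the agent picks $m_i^\ast=0$, otherwise $\alpha_i^\ast$ exists and is the unique optimizer.

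For the comparative statics in the second part, I would use monotonicity of $b'$ plus an envelope argument. Since $b'$ is non-increasing, $c_i \le c_j$ together with $b'(\alpha_i^\ast)=c_i$ and $b'(\alpha_j^\ast)=c_j$ forces $\alpha_i^\ast \ge \alpha_j^\ast$ whenever both agents contribute; the corner cases in which one or both agents pick $0$ only strengthen $m_i^\ast \ge m_j^\ast$ since a lower marginal cost can never make contributing less attractive. The utility comparison is then a one-line envelope argument:
\begin{equation*}
u_i(m_i^\ast) \;\ge\; u_i(m_j^\ast) \;=\; a(m_j^\ast) - c_i\, m_j^\ast \;\ge\; a(m_j^\ast) - c_j\, m_j^\ast \;=\; u_j(m_j^\ast),
\end{equation*}
where the first inequality is the optimality of $m_i^\ast$ for agent $i$ and the second uses $c_i\le c_j$ with $m_j^\ast\ge 0$. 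The only subtlety to watch for is the boundary $\alpha_i^\ast=m_0$ (or non-differentiability of $b$ at a finite set of points), which is handled cleanly by replacing $b'$ with a subgradient of the concave function $b$; concavity guarantees the FOC-based characterization still identifies the global maximizer of the concave branch.
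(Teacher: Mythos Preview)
Your proof is correct and follows essentially the same route as the paper: both split at the kink $m_0$ where $a$ leaves zero, analyze the concave branch $b(m)-c_i m$ via a first-order condition, and compare against $u_i(0)=0$; your envelope inequality for the utility comparison and the monotonicity of $b'^{-1}$ for $m_i^\ast\ge m_j^\ast$ are exactly what the paper does. The only cosmetic difference is that the paper organizes the case analysis by the sign of $u_i'(m^0)$ (three cases: high/mid/low-cost) whereas you go directly to the two-branch comparison, which is arguably cleaner and more closely mirrors the dichotomy in \eqref{eq:opt-single-agent}.
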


As Figure~\ref{fig:single-agent} shows, if the learning problem is too hard ($m^0$ is large) or if the marginal cost $c_i$ is too high, the problem becomes infeasible for an individual agent to solve with $m_i^\ast = 0$. Such cases are especially important in federated learning where we want to enable agents to solve problems together which they cannot on their own. In other cases, the agent collects $m_i^\ast > 0$ data points.

\begin{wrapfigure}{r}{7cm}\label{fig:example-2}\vspace{-1.1em}
\centering
    \includegraphics[width=0.7\linewidth]{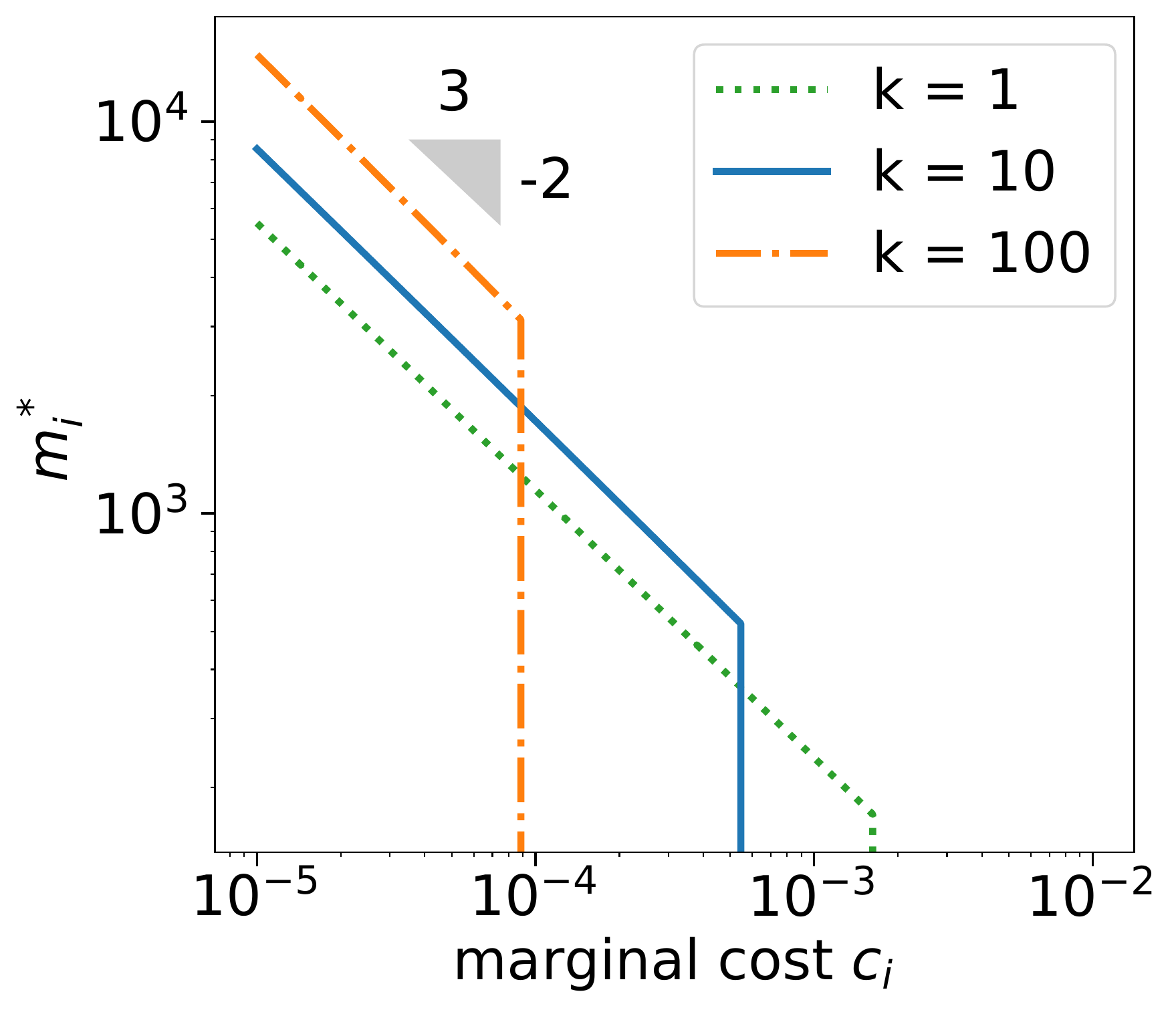}\vspace{-1.1em}
  \caption{Optimal data contribution $m_i^\ast$ versus the marginal cost $c_i$ for different number of total agents.}\vspace{-1.1em}
\end{wrapfigure}


\begin{example}[Computing individual contributions]\label{example:ind-opt}
Consider the accuracy function arising from the generalization guarantees in Example~\ref{example:ERM}, and an agent with marginal cost $c_i$. Utilizing the simpler accuracy definition in \eqref{eqn:simple-generalization}, we can derive the following. The utility of agent $i$ becomes $u_i(m) = \max(0, a_{opt} - 2\sqrt{k/m}) - c_i m$. For this function, the minimum amount of data before accuracy starts to improve is $m^0 = 4a_{opt}^2/k$. The maximum cost at which the learning problem is viable is $c_i \leq a_{opt}^3/3k$. For any higher cost, the agent does not attempt to solve the problem with $m_i^\ast = 0$. When the cost is lower than this threshold, the agent generates data $m_i^\ast = k^{1/3} c_i^{-2/3}$ and obtains an utility of $u_i(m_i^\ast) = a_{opt} - 3 (kc_i)^{1/3}$. As expected, the minimum viability cost threshold scales inversely with the difficulty of the learning problem. Beyond this threshold, an agent's contribution decreases with cost and increases with the difficulty.
\end{example}

Empirically, we can plot the optimal individual generations using the full analytic form of the utility function (see Eq~\eqref{eqn:generalization}). Figure~\ref{fig:example-2} shows the optimal data contribution $m_i^\ast$ versus the marginal cost $c_i$ for different number of total agents on a log-log scale. We see that the optimal contribution decreases with cost as $m_i^\ast \propto c_i^{-2/3}$, matching the theory. The vertical lines indicate the cutoff for minimum viability---beyond this, the cost is too high for the problem to be solvable by an individual. This minimum viability cost is smaller for more harder problems (larger $k$), but the optimum contribution increases with increasing $k$ once this threshold is crossed.






\section{Modeling Multiple Agents and Catastrophic Free-Riding} \label{sec:free_ride}
In this section we will study how agents behave when collaborating with each other as they do in federated learning. For this, we use a principal-\emph{multi}-agent framework where the server who sets up the federated learning server is the principal.

\subsection{Interaction between agents and server}

The interaction between the federated learning server and the agents is formalized by a mechanism 
\begin{equation}
    \cM(\vm): \R_{\geq 0}^n \rightarrow [0, 1]^n\,, \text{ which maps agents' contributions to accuracies.}
\end{equation} We assume that each agent $i$ generates and transmits $m_i$ i.i.d.\ data points to the server, ignoring privacy concerns (see  Sec.~\ref{sec:discussion}). Based on these contributions, the mechanism assigns accuracies to the models received by the clients; \ie, if agent $i$ contributes $m_i$ data points it receives a model with accuracy $a_i \in [0,1]$, where $\cM(m_1, \dots, m_n) = (a_1, \dots, a_n)$. 

The interaction proceeds in three steps: (i) first the server publishes a mechanism $\cM$, then (ii) each agent generates and transmits some data $m_i$ to the server, and finally (iii) the server returns a trained model to each agent following the mechanism. 
Note that the agents decide how much data to generate adaptively \emph{after} knowing the mechanism $\cM$. However, they do not have any bargaining power---they cannot re-negotiate the mechanism---but can only decide if they join or not. We also disallow monetary compensation or exchanges between the parties since implementing them adds additional complexity. The only guarantee is that the server truthfully executes the protocol $\cM$. 

Given that the server necessarily needs to follow through on the mechanism, we need to make sure the mechanism is implementable. 
\begin{definition}[\textbf{Feasible mechanism}]\label{def:feasible}
A mechanism which returns accuracy $[\cM(\vm)]_i$ to agent $i$ is said to be \emph{feasible} if for any $i \in [n]$ and any $\vm \in \R_{\geq 0}^n$, it satisfies $[\cM(\vm)]_i \leq a(\sum_j m_j)$\,.
\end{definition}
This is because we can pool together all the agent contributions $\vm$ and train a model to accuracy $a(\sum_j m_j)$. Since $a(\cdot)$ is monotone, this is an upper bound on the accuracy which can be obtained. However, it is always possible to use a subset of this data, or degrade the model in a controlled way using noisy perturbations. Thus, this captures mechanisms which are implementable in practice.

Faced with a potential feasible mechanism $\cM$, an agent has to decide whether to join or simply train on their own. A mechanism which offers an especially bad accuracy would discourage an agent and they would likely leave the server and train on their own. We will formalize this next.
\begin{definition}[\textbf{Individual rationality (IR)}]\label{def:ir}
Given data contributions $\vm$ by the $n$ agents with costs $\vc$, the mechanism provides a model with accuracy $[\cM(\vm)]_i$ to agent $i$. Such a mechanism $\cM$ is said to satisfy IR if for any agent $i \in [n]$ and any contribution $\vm$,
\begin{equation}\label{eqn:ir}
    [\cM(\vm)]_i - c_i m_i \geq a(m_i) - c_i m_i\,.
\end{equation}
\end{definition}
A mechanism which satisfies individual rationality guarantees that for any agent the accuracy of the model received (and hence their utility) will be no worse than if they trained on their own. 
Since IR guarantees that all rational agents will participate in our mechanism, and participation is key to success of a federated learning platform, we will restrict our focus henceforth to mechanisms which satisfy IR. In the context of our running example of data-sharing legislation for autonomous driving, ensuring the mechanism satisfies IR means that the legislation is a win-win for all parties encouraging support and compliance---consumers get safer models and car companies get higher utilities.

Given any mechanism $\cM$, we would like to argue about how rational agents would respond and how much data they would contribute. For this, we use the notion of an equilibrium.
\begin{restatable}[Existence of pure equilibrium]{theorem}{equlirbiumtheorem}\label{thm:equilibrium}
    Consider a feasible mechanism $\cM$ which can be expressed as:
    \[
        [\cM(m_i ; \vm_{-i} )]_i = \max\rbr*{0, \nu_i(m_i ; \vm_{-i})}\,,
    \]
    for a function $\nu_i(m_i; \vm_{-i})$ which is \emph{continuous} in $\vm$ and \emph{concave} in $m_i$. For any such $\cM$, there exists a pure Nash equilibrium in data contributions $\vm^{eq}(\cM)$ which for any agent $i$ satisfies,
\begin{equation}\label{eqn:nash}
    [\cM(\vm^{eq}(\cM))]_i - c_i m_i^\cM \geq [\cM( m_i, \vm^{eq}(\cM)_{-i})]_i - c_i m_i\,, \text{ for all } m_i \geq 0\,.
\end{equation}
\end{restatable}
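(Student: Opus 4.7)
The plan is to apply Kakutani's fixed-point theorem to the joint best-response correspondence. The first step is to compactify the strategy space: since accuracy is bounded above by $1$, the utility $u_i(m_i;\vm_{-i}) := \max(0,\nu_i(m_i;\vm_{-i})) - c_i m_i$ is at most $1 - c_i m_i$, which is strictly smaller than $u_i(0;\vm_{-i}) \geq 0$ whenever $m_i > 1/c_i$. Hence every best response lies in the compact convex interval $K_i := [0, 1/c_i]$, and it suffices to produce an equilibrium on the compact convex cube $K := \prod_i K_i$; the Nash inequality \eqref{eqn:nash} then automatically extends to all $m_i \geq 0$ since deviations outside $K_i$ are strictly dominated by $m_i = 0$. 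Continuity of $\nu_i$ in $\vm$ and of $\max(0,\cdot)$ implies that $u_i$ is jointly continuous on $K$.

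Second, I would exploit the concavity of $\nu_i$ in $m_i$ to analyze the best-response correspondence $BR_i(\vm_{-i}) := \argmax_{m_i \in K_i} u_i(m_i;\vm_{-i})$. The set $S_i(\vm_{-i}) := \{m_i \in K_i : \nu_i(m_i;\vm_{-i}) \geq 0\}$ is a closed convex subinterval of $K_i$ by concavity. On $S_i$ the utility equals $\nu_i - c_i m_i$, which is concave in $m_i$ with a convex argmax; off $S_i$ the utility equals $-c_i m_i$, strictly smaller than $u_i(0) \geq 0$ for any $m_i > 0$. Consequently every best response either equals $0$ or lies in the concave argmax on $S_i$. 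Whenever $0 \in S_i$ (i.e., $\nu_i(0;\vm_{-i}) \geq 0$) these two regimes coincide and $BR_i$ is convex-valued; Berge's maximum theorem then yields upper hemicontinuity, and Kakutani applied to $\prod_i BR_i$ on the compact convex $K$ delivers a pure Nash equilibrium satisfying \eqref{eqn:nash}.

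The main obstacle is the residual case $\nu_i(0;\vm_{-i}) < 0$ in which the agent is indifferent between $m_i = 0$ and the interior maximizer $\tilde m_i(\vm_{-i}) > 0$ on $S_i$: there $BR_i = \{0, \tilde m_i\}$ is a two-point set and hence not convex-valued, so Kakutani does not apply verbatim. I would resolve this via a perturbation argument. Replace $c_i$ by $c_i + \epsilon_i$ for a generic small $\epsilon_i > 0$, which shifts the tie locus to a non-generic lower-dimensional subset of $K$ and makes $BR_i^\epsilon$ convex- (in fact, single-) valued almost everywhere. Kakutani then produces a pure Nash equilibrium $\vm^{eq,\epsilon}$ for each perturbed game, and compactness of $K$ extracts a convergent subsequence $\vm^{eq,\epsilon_k} \to \vm^{eq}$; joint continuity of $u_i$ in $(\vm,\epsilon)$ lets us pass to the limit in \eqref{eqn:nash} and conclude existence of a pure Nash equilibrium for the original game.
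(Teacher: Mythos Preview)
Your compactification to $K = \prod_i [0,1/c_i]$ and your identification of the obstacle---the two-point, non-convex best-response set $\{0, \tilde m_i\}$ at the tie locus---match the paper exactly, and your limit step would be valid (Berge gives closed graph jointly in $(\vm,\epsilon)$, so limits of equilibria are equilibria).

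The gap is in the perturbation step. Replacing $c_i$ by $c_i + \epsilon_i$ does not remove the non-convexity: for every fixed $\epsilon_i > 0$ the perturbed utility $\max(0, \nu_i) - (c_i+\epsilon_i) m_i$ has the same structure as the original, and there is still a nonempty set of opponent profiles $\vm_{-i}$ at which the interior maximum of $\nu_i - (c_i+\epsilon_i) m_i$ equals $0$, making $BR_i^\epsilon(\vm_{-i}) = \{0, \tilde m_i^\epsilon\}$ a two-point set. Kakutani requires convex values \emph{everywhere} on the domain, not almost everywhere; the perturbation merely shifts the bad locus rather than deleting it, and nothing prevents a putative fixed point from landing on it. Hence the existence of $\vm^{eq,\epsilon}$ is never established, and the argument collapses before you can take the limit.

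The paper resolves the non-convexity by exploiting the very structural observation you already made---that $0$ is the only culprit---but constructively rather than via perturbation. It records as a lemma that $BR_i(\vm)$ can fail to be convex only when $0 \in BR_i(\vm)$, and then runs an induction over index sets: starting from the zero profile, collect the agents $\cI$ for whom $0$ is \emph{not} a best response; on the restricted game over $\prod_{i\in\cI}[0,1/c_i]$ (with the remaining agents frozen at $0$) the correspondence is convex-valued and Kakutani yields a partial fixed point; if this is not already a full equilibrium then the active set $\cI$ strictly grows, and one iterates, terminating in at most $n$ rounds.
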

Thus, under reasonable conditions on the mechanism $\cM$ which are satisfied for all the mechanisms we consider, an equilibrium always exists. Note that the mechanism is not concave because of the presence of a $\max(0, \cdot)$, and the resulting utilities of the agents are not even quasi-concave. Despite this, our proof uses the specific properties of our setting to prove existence, and may be more broadly applicable to study non-concavities arising from ``minimum viability'' as is captured by the max. The existence of such an equilibrium allows us to confidently use the data contributions at this equilibrium to evaluate and compare different mechanisms.
\subsection{Standard federated setting}
We now examine the behavior of rational agents in the standard federated learning. Returning a model trained on the combined dataset to everyone corresponds to the mechanism 
\begin{equation}
\cM(\vm) = \rbr*{a(\textstyle \sum_{j}m_j)\,, \forall i \in [n]}\,.    
\end{equation}
Clearly, this mechanism is feasible (Def.~\ref{def:feasible}) and also satisfies individual rationality (Def.~\ref{def:ir}) since the accuracy function $a(\cdot)$ is non-decreasing and $\sum_j m_j \geq m_i$ for any $i \in [n]$. In fact, given a data contribution $\vm$, this mechanism maximizes utility for all agents. This observation may at first seem like a strong argument in favor of this standard scheme. However, recall that the agents choose their contribution $\vm$ \emph{after} the server publishes the mechanism $\cM$. Thus, we need to first analyze how much data rational agents would contribute.

\begin{restatable}[Catastrophic free-riding]{theorem}{federatedtheorem}\label{thm:federated}
    Let $\{m_i^\ast\}$ be the equilibrium contributions of agents when alone, with agent with least cost having a contribution $m_i^\ast =: m^{fed}$. The standard federated learning mechanism corresponding to $[\cM(\vm)]_i = a(\sum_{j}m_j)$ for all clients $i$ is feasible and IR. Further, the total amount of data collected at equilibrium remains $m^{fed}$, with only the lowest cost agent contributing:
    \begin{equation}
        m_i^{eq} = \begin{cases}
            m_{i}^\ast =: m^{fed} &\text{ if } i = \argmin_{j}c_j\\
            0 &\text{ otherwise.}
        \end{cases}
    \end{equation}
\end{restatable}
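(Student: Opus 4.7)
Feasibility is immediate since $[\cM(\vm)]_i = a(\sum_j m_j)$ trivially satisfies the bound in Definition~\ref{def:feasible}. Individual rationality (Definition~\ref{def:ir}) follows from monotonicity of $a(\cdot)$: for any agent $i$, $a(\sum_j m_j) \geq a(m_i)$, so $[\cM(\vm)]_i - c_i m_i \geq a(m_i) - c_i m_i$. So all the work is in characterizing the equilibrium.

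The heart of the argument is a clean characterization of each agent's best response. Given contributions $\vm_{-i}$ with sum $M_{-i} = \sum_{j\neq i} m_j$, agent $i$'s utility is $u_i(m_i;\vm_{-i}) = a(m_i + M_{-i}) - c_i m_i = \max(0, b(m_i + M_{-i})) - c_i m_i$. On the branch where $b(\cdot) \geq 0$, this is concave in $m_i$ with first-order condition $b'(m_i + M_{-i}) = c_i$, whose solution is $m_i^{\mathrm{BR}} + M_{-i} = \alpha_i^\ast$, the same threshold that appears in Theorem~\ref{thm:ind}. Together with the $m_i = 0$ option, the best response reduces to: either contribute $m_i^{\mathrm{BR}} = \max(0, \alpha_i^\ast - M_{-i})$ if the resulting utility is nonnegative, or else contribute $0$. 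I would package this as a short lemma, handling the non-concavity introduced by $\max(0,\cdot)$ by comparing the two candidate optima explicitly (the marginal-benefit inequality $b'(M_{-i}) \leq c_i$ rules out positive contributions on the flat branch via concavity of $b$).

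Given the best-response characterization, the equilibrium structure falls out quickly. Let $j \in \argmin_k c_k$ be (any) lowest-cost agent and consider the profile $(m_j^\ast, 0, \ldots, 0)$. For any $i \neq j$, $M_{-i} = m_j^\ast$. If $m_j^\ast > 0$, then $m_j^\ast = \alpha_j^\ast$, and since $b'$ is non-increasing with $c_i \geq c_j = b'(\alpha_j^\ast)$, we get $b'(M_{-i}) \leq c_i$, forcing $m_i^{\mathrm{BR}} = 0$. If $m_j^\ast = 0$, Theorem~\ref{thm:ind} implies $m_i^\ast = 0$ for every $i$, so all zero contributions is a best response. For agent $j$, $M_{-j} = 0$ makes the best-response problem identical to the single-agent problem of Theorem~\ref{thm:ind}, so $m_j^\ast$ is indeed optimal. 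Hence $(m_j^\ast, 0, \ldots, 0)$ is an equilibrium with total data $m^{fed} = m_j^\ast$. For the converse (to pin down the \emph{total} data at any equilibrium), I would argue that at any equilibrium with two agents $i \neq k$ both contributing positively, the FOCs $b'(M) = c_i$ and $b'(M) = c_k$ would force $c_i = c_k$; so, modulo ties at the minimum cost, only one agent contributes, and that agent must be a minimum-cost agent (else their FOC would clash with agent $j$'s strictly preferred deviation to contribute).

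The main obstacle, and the reason the argument is not completely routine, is the non-concavity of $u_i(m_i;\vm_{-i})$ in $m_i$ coming from the $\max(0,\cdot)$: the best response is not characterized by a single FOC, and one has to compare utilities across the two regimes (viable/non-viable total dataset). A secondary nuisance is the tie-breaking when several agents share the minimum cost; I would simply state that any split of $m_j^\ast$ among the minimum-cost agents is an equilibrium, so the conclusion on the total contribution $m^{fed}$ is unchanged even if the per-agent allocation is not unique.
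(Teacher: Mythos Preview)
Your proposal is correct and follows essentially the same route as the paper's proof: both argue feasibility/IR from monotonicity, then use the best-response first-order condition $b'(M) = c_i$ together with $b'(m_{\tilde\jmath}^\ast) = c_{\tilde\jmath} \leq c_i$ (for the least-cost agent $\tilde\jmath$) to show every other agent's best response is zero, with a separate treatment of the $m_{\tilde\jmath}^\ast = 0$ case. Your explicit packaging of the best-response characterization as a lemma, and your converse argument pinning down the \emph{total} contribution at any equilibrium (plus the tie-breaking remark), are in fact more careful than the paper, which argues the $m_{\tilde\jmath}^\ast>0$ case somewhat informally and only establishes uniqueness explicitly in the $m_{\tilde\jmath}^\ast=0$ case.
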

The agent with the least cost $c_{\min} = \min_j c_j$ would have collected $m^{fed}$ amount of data on their own. For any other agent $i$, $c_i \geq c_{\min}$ and so $m_i^\ast \leq m^{fed}$. Thus, agent $i$ would already have access to data sufficient to satisfy them. The increase in accuracy $a(\cdot)$ for collecting an additional data point beyond this is less than the marginal cost $c_i$ incurred. This results in catastrophic free-riding, with only a single agent collecting data.
In the case of our autonomous vehicle legislation example, this implies that naive data-sharing legislation will have no effect. The providers would strategically reduce the amount of data collected so that each of them is using almost the same amount of data as before. 

\begin{remark}[Failure of collaboration]
Consider the case where $m^{\ast}_i = 0$ for all agents $i$, either because the learning problem is too hard or because the cost of data collection is too high for any individual agent. Theorem~\ref{thm:federated} shows that $m^{fed} = 0$ and no data will be collected even with collaboration. Thus, if a problem is too costly to solve by an individual, it will remain insurmountable when collaborating with rational self-interested agents. This defeats one of the main motivations of federated learning.
\end{remark}



\section{Accuracy Shaping with Verifiable Costs}\label{sec:accuracy_shaping}
How do we design mechanisms which prevent free-riding? In this section we will study this question assuming everyone (the server and the agents) know the costs $\vc$ involved in producing the data (we study the unknown costs setting in Section~\ref{sec:information_rent}), or at the least the costs can be verified; i.e., the agent cannot incur cost $c$ and report a different cost $\tilde c$. This is sometimes justifiable---the cost of driving a vehicle to generate a data point by an autonomous-driving provider can be easily estimated by all parties, as can the cost of labelling a data point by a crowd-worker. We formalize our goal of data maximization and give a simple optimal mechanism for it. Then, we look at some implications of the proposed solution---how fairly it distributes the surplus and potential moral hazards it might induce.
\subsection{Data maximization using accuracy shaping}
A mechanism $\cM$ is data-maximizing given costs $\vc$ if it maximizes the data collected at equilibrium.
\begin{definition}[\textbf{Data Maximization}]\label{def:data-max}
Suppose that given a mechanism $\cM$, let $\vm^{eq}(\cM)$ correspond to the amount of data generated by the agents at equilibrium. $\hat\cM$ is \emph{data-maximizing} if it maximizes the amount of data collected at equilibrium
\begin{equation}\label{eqn:data-max}
   \hat\cM \in  \argmax_{\cM} \textstyle\sum_{j}[m^{eq}(\cM)]_j\,,  \text{subject to $\cM$ being feasible and satisfying IR.}
\end{equation}
\end{definition}

\begin{wrapfigure}{r}{8cm}\vspace{-1.1em}
    \centering
    \vspace{-1.1em}
    \includegraphics[width=\linewidth]{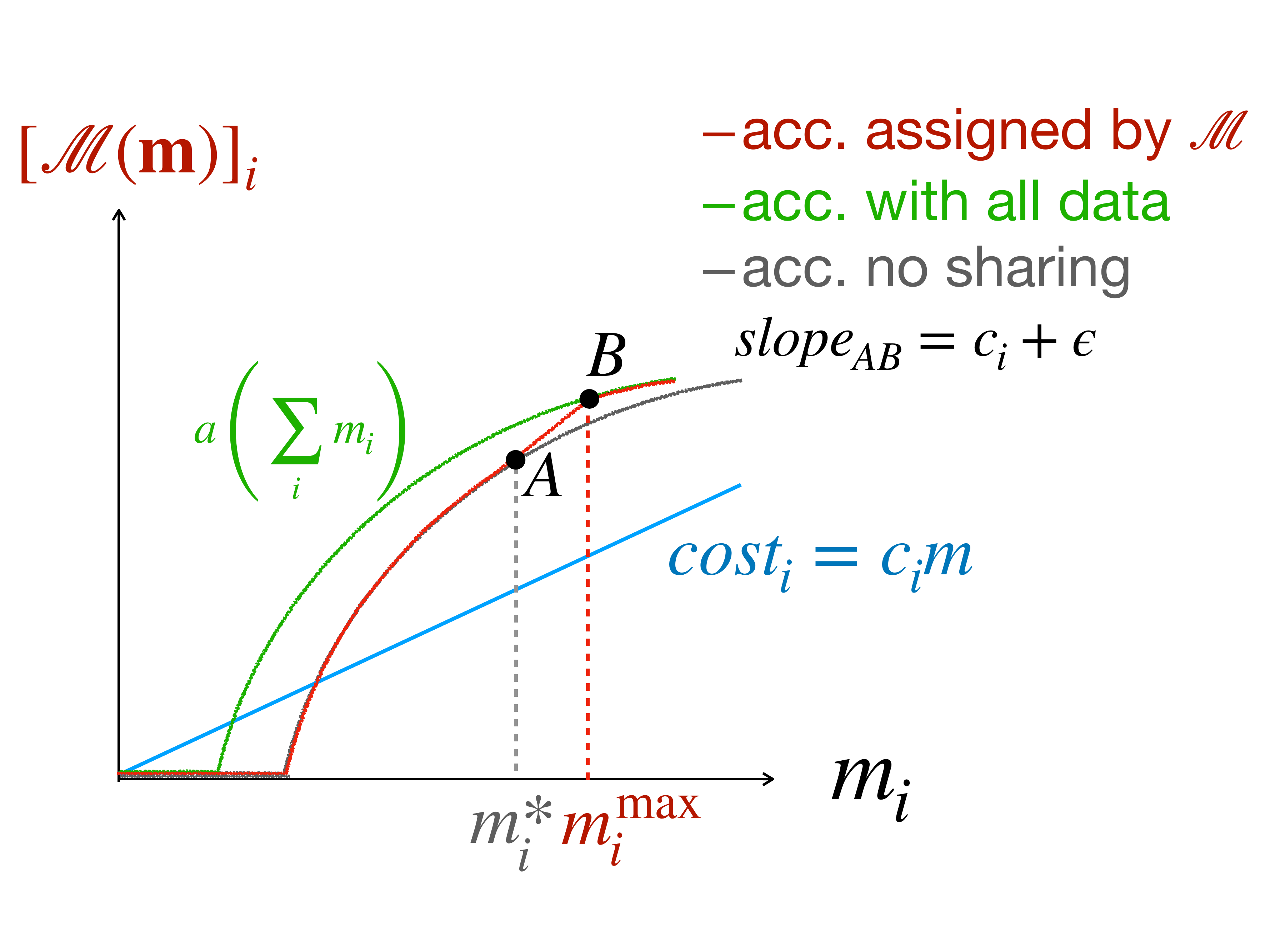}
    \caption{Illustration of the accuracy shaping mechanism. \textit{(red curve)}: model accuracy returned to agent $i$ by the mechanism; \textit{(grey curve)}: model accuracy for agent $i$ without participation; \textit{(green curve)}: model accuracy if agent $i$ receives all the data from the other agents.}
    \label{fig:accuracy-shaping-known-costs}
\end{wrapfigure}

Given a mechanism $\cM$, we first have to reason about how much each agent would contribute. As we will see, there is always an equilibrium  contribution $\vm^{eq}(\cM)$ satisfying \eqref{eqn:nash} such that no agent can improve their utility by unilaterally changing their contribution. If all players are rational (and such an equilibrium is unique), then such a point is a natural attractor with all the agents gravitating towards such contributions. So, a reasonable goal for us as a mechanism designer is to pick an $\cM$ which maximizes the amount of data collected when all players are contributing such equilibrium amounts.

If we give $\Delta m_i$ free data to agent $i$, then at equilibrium  they will reduce the data they generate---they will only generate $(m_i^\ast - \Delta m_i)$ additional data. To prevent this, our key insight is to condition the amount of extra data $\Delta m_i$ on the agent's contribution.
For given set of costs $\vc$ and some small $\varepsilon > 0$, consider the following mechanism:
    \begin{equation}\label{eqn:opt-mechanism}
        [\cM(\vm)]_i = 
            \begin{cases}
                a(m_i) \quad &\text{for } m_i \leq  m_i^\ast \\
                a(m_i^\ast) + (c_i + \varepsilon)(m_i - m_i^\ast)  \quad &\text{for } m_i \in [m_i^\ast, m_i^{\max}] \\
                a(\sum_j m_j)  \quad &\text{for } m_i \geq m_i^{\max} \,,
            \end{cases}
    \end{equation}
where $m_i^{\max}$ is defined such that  $a(m_i^{\max} + \sum_{j\neq i} m_j) = a(m_i^\ast) + (c_i + \varepsilon)(m_i^{\max} - m_i^\ast)$.    
We illustrate the mechanism in Figure~\ref{fig:accuracy-shaping-known-costs}.
Even without any external incentivization, agent $i$ will compute $m_i^\ast$ data points. Thus, for $m_i \leq m_i^\ast$ \eqref{eqn:data-max} returns a model trained on solely their own data. After $m_i^\ast$, however, the marginal gain in accuracy becomes smaller than the additional cost $c_i$. Hence, the agent requires active incentivization here and \eqref{eqn:data-max} ensures that for every additional data point computed, the marginal gain in accuracy is strictly more than the cost $c_i$. However, the mechanism cannot provide unlimited accuracy either and has to remain feasible, giving us our final constraint. 
\begin{restatable}[Data maximization with known costs]{theorem}{theoremoptmechanism}\label{thm:opt-mechanism}
    The mechanism $\cM$ defined by \eqref{eqn:opt-mechanism} has an unique Nash equilibrium and is data-maximizing for $\varepsilon \rightarrow 0^+$. At equilibrium, a rational agent $i$ will contribute $m_i^{\max}$ data points where $m_i^{\max} \geq m_i^\ast$, yielding a total of $\sum_j m_j^{\max}$ data points. 
\end{restatable}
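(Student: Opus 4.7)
The plan is to analyze each agent's best response under $\cM$, reduce the Nash equilibrium to a scalar fixed-point equation in $M^{eq} := \sum_j m_j^{eq}$, and then match the resulting aggregate contribution against a universal upper bound that every feasible IR mechanism must obey.

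First I would decompose agent $i$'s utility $u_i(m_i;\vm_{-i}) := [\cM(m_i;\vm_{-i})]_i - c_i m_i$ on the three pieces of $\cM$. On region~1 ($m_i \le m_i^\ast$) it equals $a(m_i) - c_i m_i$, maximized at $m_i^\ast$ by Theorem~\ref{thm:ind}. On region~2 ($m_i^\ast \le m_i \le m_i^{\max}$) algebra collapses $u_i$ to $a(m_i^\ast) - c_i m_i^\ast + \varepsilon(m_i - m_i^\ast)$, strictly increasing with slope $\varepsilon > 0$. On region~3 the marginal of $u_i$ is $a'(\sum_j m_j) - c_i$; by concavity of $a$ and the first-order condition $b'(m_i^\ast) = c_i$, this marginal is non-positive provided $m_i^{\max} + \sum_{j \neq i} m_j \ge m_i^\ast$, which itself follows by evaluating the defining equation of $m_i^{\max}$ at $m_i^{\max} = m_i^\ast$ and invoking monotonicity of $a$. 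Together with continuity at both kinks, agent $i$'s unique best response is $m_i^{\max}(\vm_{-i})$, and in particular $m_i^{\max} \ge m_i^\ast$.

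Next I would characterize the equilibrium. Plugging the best response into the defining equation of $m_i^{\max}$ shows that any Nash equilibrium $\vm^{eq}$ satisfies $a(M^{eq}) = a(m_i^\ast) + (c_i + \varepsilon)(m_i^{eq} - m_i^\ast)$ for every $i$. Solving for $m_i^{eq}$ and summing over $i$ reduces the equilibrium condition to the single scalar equation
\[
M^{eq} \;=\; \sum_i m_i^\ast \;+\; \sum_i \frac{a(M^{eq}) - a(m_i^\ast)}{c_i + \varepsilon}.
\]
The left-hand side is linear with slope $1$; the right-hand side is concave in $M^{eq}$ (a positive combination of $a$) and bounded above by $a_{opt}$ times a positive constant. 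Since the right-hand side dominates the left at $M = \sum_i m_i^\ast$ and is eventually dominated, the two curves cross at a unique point, yielding a unique equilibrium aggregate $M^{eq} \ge \sum_i m_i^\ast$, which back-substitutes to a unique contribution profile with $m_i^{eq} = m_i^{\max}$.

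For data-maximality as $\varepsilon \to 0^+$, I would derive a universal upper bound. For any feasible IR mechanism $\cM'$ with equilibrium $\vm'$, the equilibrium condition applied at the deviation $m_i \to m_i^\ast$, together with IR at that deviation, gives $[\cM'(\vm')]_i - c_i m_i' \ge a(m_i^\ast) - c_i m_i^\ast$; combining with feasibility $[\cM'(\vm')]_i \le a(M')$ yields $m_i' \le m_i^\ast + (a(M') - a(m_i^\ast))/c_i$. Summing over $i$ gives $\phi(M') \le 0$ where $\phi(x) := x - \sum_i m_i^\ast - \sum_i (a(x) - a(m_i^\ast))/c_i$ is convex with $\phi(x) \to \infty$ as $x \to \infty$; hence $M'$ is at most the largest root $M_0^\ast$ of $\phi$. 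Taking $\varepsilon \to 0^+$ in our fixed-point equation drives $M^{eq}$ to exactly $M_0^\ast$, proving data-maximality. The main obstacle is the piecewise, non-concave, kinked structure of both $\cM$ and $u_i$: standard first-order arguments do not apply, and one must use the specific piecewise structure both to pin down the best response at $m_i^{\max}$ and to identify the mechanism's equilibrium as the \emph{larger} root of the convex function $\phi$ so that the universal upper bound matches our mechanism's aggregate in the limit.
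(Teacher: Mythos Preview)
Your proposal is correct and takes a genuinely different route from the paper's proof.

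The paper proceeds via a \emph{best-response dominance} lemma: for any fixed profile $\vm_{-i}$ and any feasible IR mechanism $\tilde\cM$, it shows $[B_{\cM}(\vm)]_i \ge [B_{\tilde\cM}(\vm)]_i$, arguing by contradiction that a larger best response under $\tilde\cM$ would force $\tilde\cM$ to violate feasibility at $m_i^{\max}$. It then combines this with monotonicity of the best response in $\sum_{j\ne i} m_j$ and an iterated fixed-point argument (restricting the domain to $\cC_{\ge \tilde\vm}$ and reapplying the equilibrium-existence machinery) to conclude that the equilibrium of $\cM$ dominates that of $\tilde\cM$ \emph{coordinate-wise}.

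Your approach instead works at the aggregate level throughout. After the (shared) best-response computation, you collapse the equilibrium system to a single scalar fixed-point equation in $M^{eq}$, and for optimality you extract the universal bound $m_i' \le m_i^\ast + (a(M') - a(m_i^\ast))/c_i$ directly from the three black-box properties (Nash at the deviation $m_i \to m_i^\ast$, IR at that deviation, and feasibility at $\vm'$), then sum and invoke convexity of $\phi$. This is more elementary---no Kakutani-type comparison of equilibria across mechanisms, no implicit regularity on the competing mechanism---and it delivers an explicit characterization of the optimal total as the larger root of $\phi$. What you give up is the stronger coordinate-wise conclusion the paper obtains; what you gain is a cleaner argument that applies to arbitrary feasible IR mechanisms and makes the $\varepsilon\to 0^+$ limit transparent (the root of $\phi_\varepsilon$ slides to the root of $\phi$).
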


\begin{figure*}[!t]
\centering
\begin{tabular}{cccc} 
\includegraphics[width=0.3\textwidth]{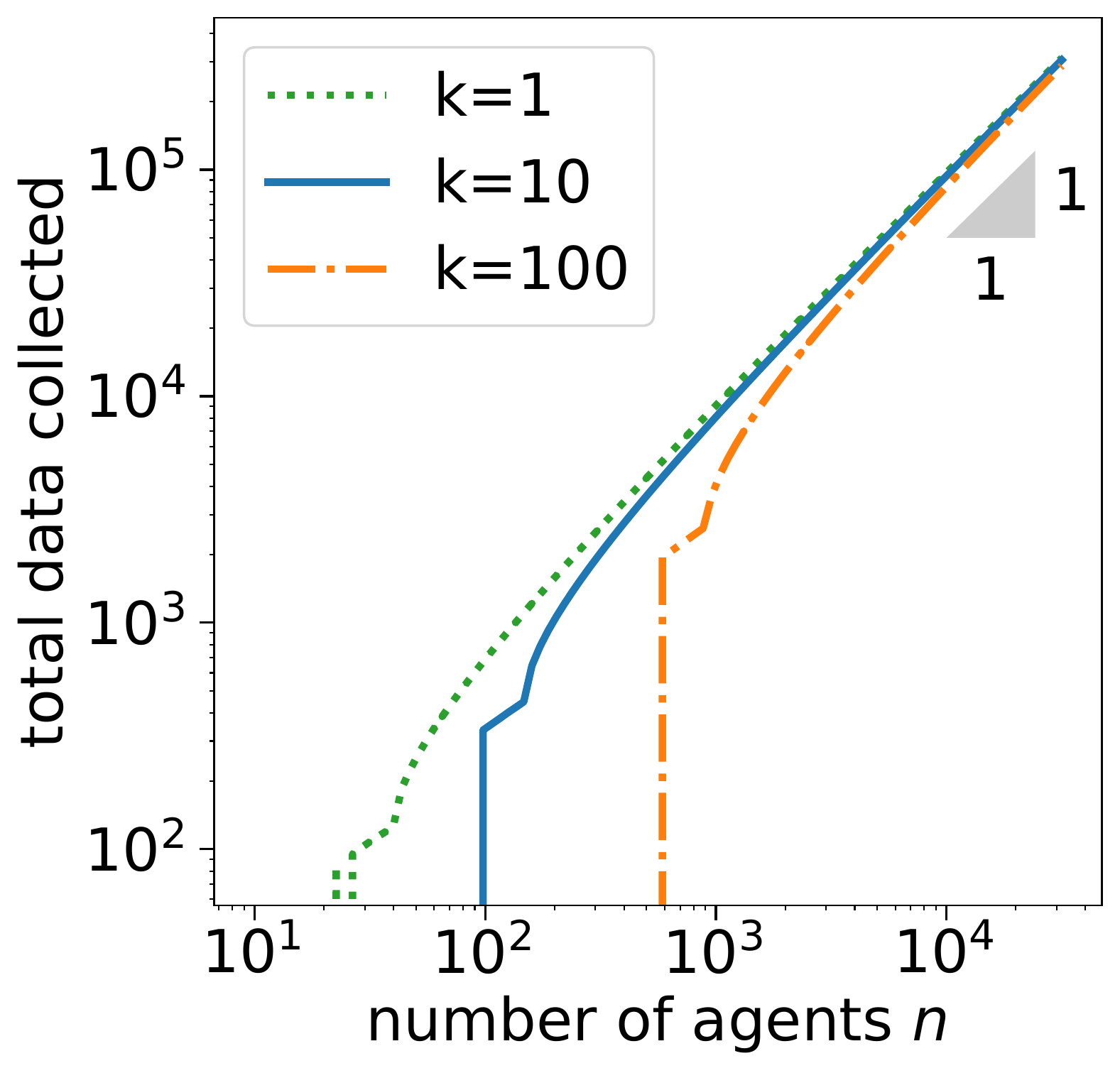} &
\includegraphics[width=0.3\textwidth]{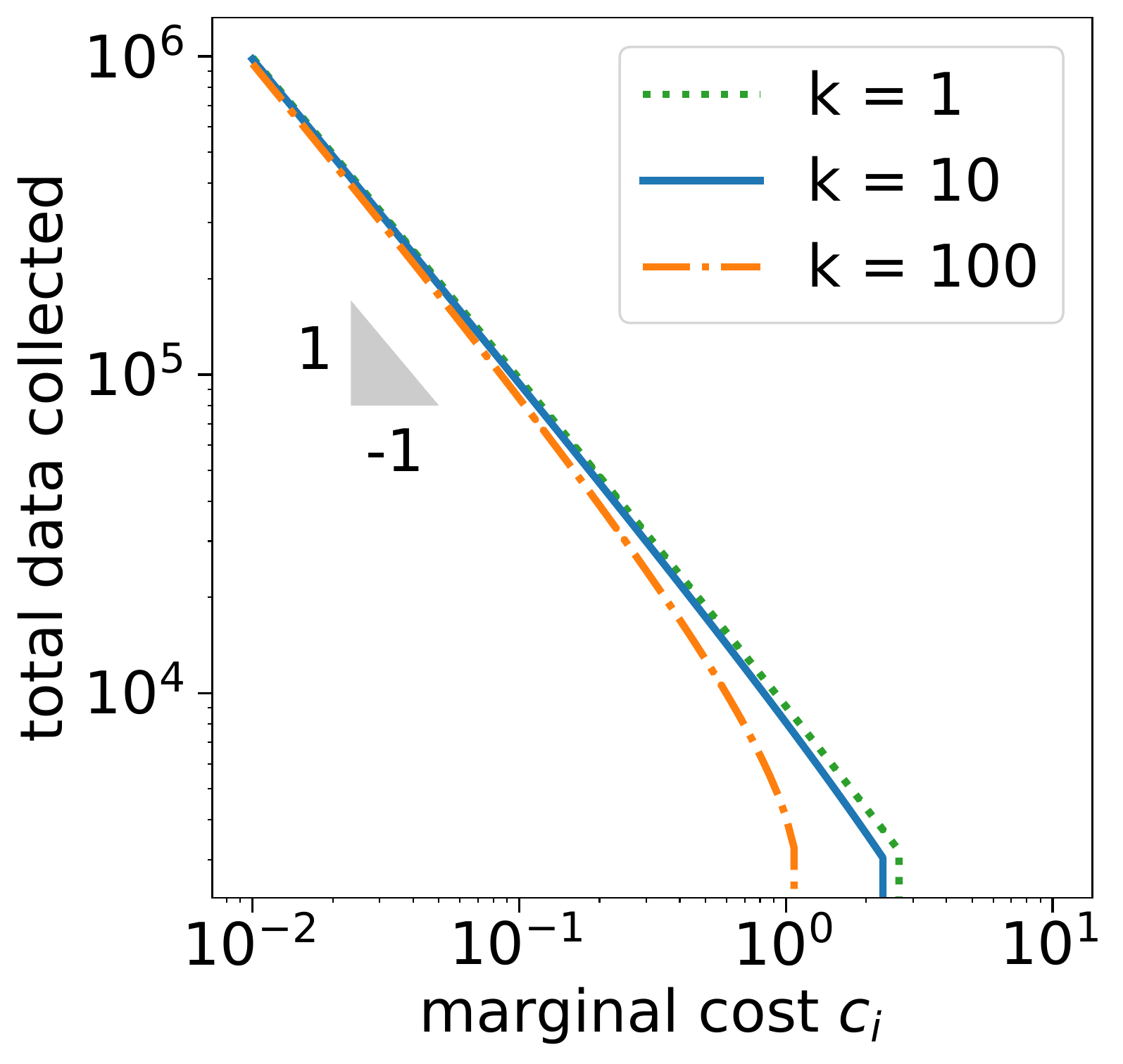} &
\includegraphics[width=0.3\textwidth]{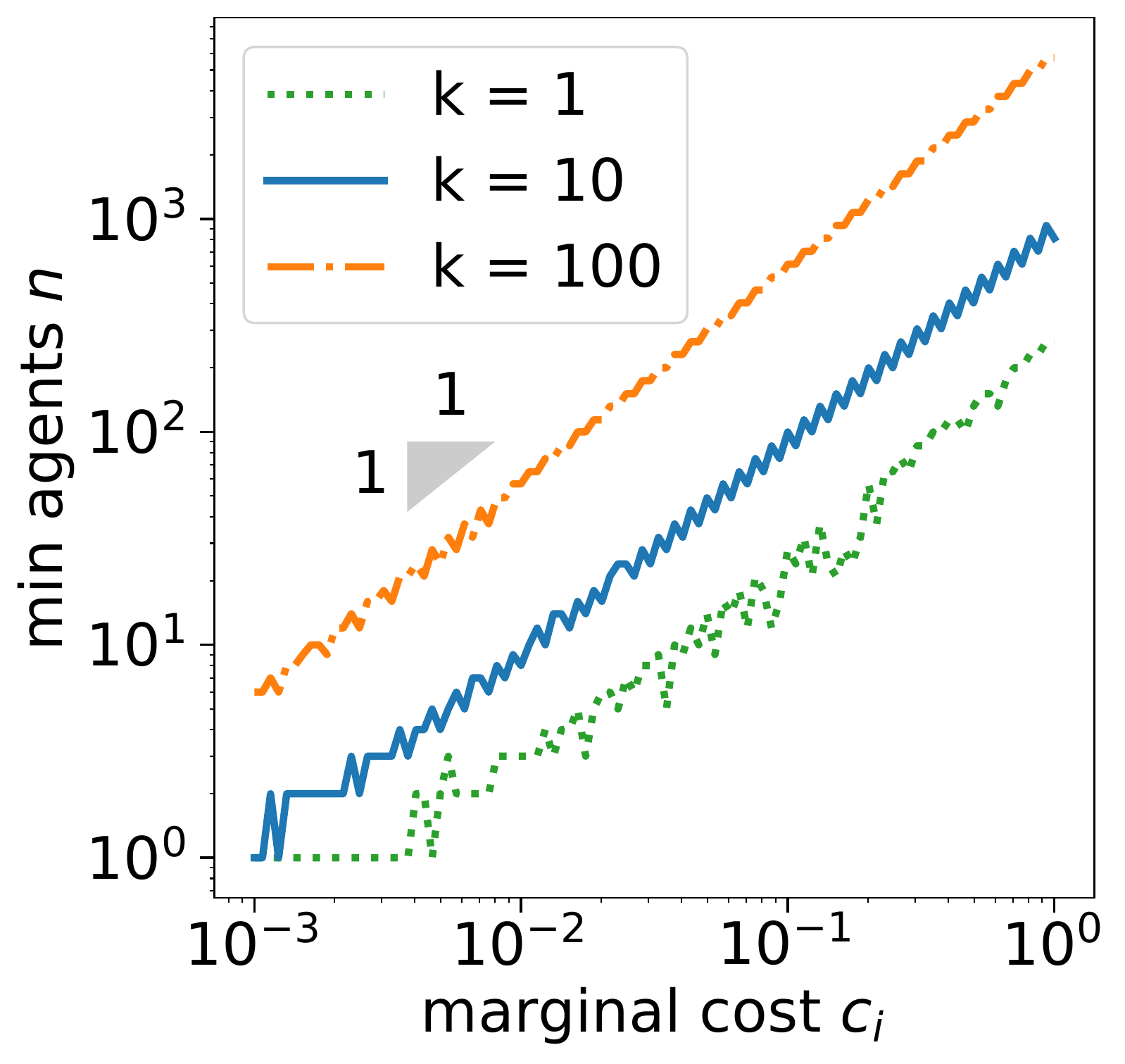}\\\vspace{-0.5em}
\scriptsize{$(a)$} & \scriptsize{$(b)$} & \scriptsize{$(c)$}
\end{tabular}
\caption{Equilibrium simulations of our mechanism: the total data collected at equilibrium  (a) increases linearly with the number of agents $n$, (b) decreases as $c_i^{-1}$ with increasing marginal cost per data point, and is relatively unaffected by the complexity $k$. (c) The number of agents required to cross the minimum viability threshold (i.e. smallest $n$ for which $m_i^{\max} > 0$) increases linearly with both the marginal cost $c_i$ and complexity $k$. Optimal individual contribution for all settings is $m_i^\ast = 0$; i.e., no data would be generated by standard federated learning.}
\label{fig:equil-cont} 
\end{figure*}

For our example of autonomous-driving providers, the state agency can set up a central repository of the data to which the car providers are required to contribute. The cost for collecting a data point (say $c$) can easily be estimated and assumed to be the same for all agents. Using this estimate, the state agency can compute a threshold. The providers don't receive any additional data for contributing up to this threshold. For each data point contributed beyond the threshold, the providers receive an increasing amount of additional data. By Theorem~\ref{thm:opt-mechanism}, this would prevent free-riding by the providers and ensure the best trained model reaches the consumers.

\begin{remark}[Credible threat]
At equilibrium, mechanism $\cM$ in \eqref{eqn:opt-mechanism} ensures all agents contribute $m_i^{\max} \geq m_i^\ast$; \ie, they generate \emph{more} data than they would on their own. Further, every agent receives a model trained on this combined dataset with accuracy $a(\sum_j m_j^{\max})$. Thus, we can ensure that all agents fully utilize the combined data by using a threat that free-riding will be punished, even though at equilibrium such a threat is never actually invoked.
\end{remark}

Suppose all agents have the same cost $c$. Theorem~\ref{thm:opt-mechanism} shows that the mechanism collects $nm^{\max}$ data points in total. However, $m^{\max}$ also depends on $n$. This is because with a larger pool of data contributions, the server can more strongly incentivize an individual and extract more data. There is a natural ceiling to this though---the accuracy caps at $a_{opt} := \lim_{m \rightarrow \infty}a(m) \leq 1$. Thus, the absolute maximum data that can be extracted from an individual agent is $m$ which satisfies $a(m^\ast) + c(m - m^\ast) = a_{opt}$. This gives us the range for the total data contributions to be $\sbr*{nm^\ast ,\, n\rbr*{m^\ast + \nicefrac{(a_{opt} - a(m^\ast))}{c}}}$.

\begin{remark}[Overcoming minimum viability]\label{rem:min-viability} When $m^\ast = 0$, i.e., the problem is not solvable by an individual agent, the net contribution from our mechanism $nm^{\max}$ may still be positive. Suppose that the cost for all agents is the same $c$. Then, the total data collected is $m^{\text{tot}}$ which satisfies 
\[\nicefrac{c}{n} \cdot m^{\text{tot}}= a(m^{\text{tot}})\,.
\]
This implies that for sufficiently large $n$, the cost $c$ is successfully shared and we obtain a positive data contribution. However, note that $m^{\text{tot}} = 0$ is also a valid solution and remains an equilibrium. If all other agents don't contribute, there is no extra data to share and so there is no incentive to compute extra data. In practice, this undesirable equilibrium is unlikely to be encountered since it has lower utility. It can also be prevented by the platform itself taking part as an agent and committing to non-zero data collection.
\end{remark}

\begin{example}[Computing maximum contributions]\label{example:max-opt}
Consider the accuracy function arising from the generalization guarantees in Example~\ref{example:ERM}, and $n$ agents all with the same marginal cost $c$. For now, suppose that $m^\ast > 0$; i.e., the cost satisfies $c \leq a_{opt}^3/3k$. In this case, the data contribution from each agent becomes 
$ m^{\max} = \max\cbr[\big]{m\geq 0 \text{ s.t. } cm + 2 \sqrt{k / (nm)} \leq 3 (ck)^{1/3}}\,. 
$
For $n \geq 3$, the contribution can be lower bounded as $m^{\max} \geq \frac{3}{2}  \sqrt[3]{k/c^2}$. Further, when $n\rightarrow \infty$ we have $m^{\max} = 3\sqrt[3]{k/c^2}$. The total contribution is thus in the range $[\frac{3n}{2}  \sqrt[3]{k/c^2}\,,\, 3n \sqrt[3]{k/c^2}]$. In contrast, as we saw in Example~\ref{example:ind-opt}, an individual agent working alone would only collect $m^\ast =  \sqrt[3]{k/c^2}$ datapoints.

Next, consider the case where $m^\ast = 0$; i.e., the learning problem is too difficult for any individual. Using the computation in Remark~\ref{rem:min-viability}, the total contribution of the agents becomes at least $m^{tot} \geq \frac{a_{opt} n}{2c}$ as soon as we have a minimum number $n \geq 32ck / a_{opt}^3$ of participating agents. The total data collected increases with $n$ and decreases with $c$, but surprisingly is not affected by the complexity $k$. The complexity $k$ only imposes a constraint on the minimum number of agents required. Thus, in contrast to the failure of standard data-sharing, agents can successfully collaborate using our mechanism on otherwise insurmountable learning problems.

\end{example}
Empirically, in Figure~\ref{fig:equil-cont} we compute the equilibrium under the full utility function (see Eq~\eqref{eqn:generalization}) of our mechanism. We assume all agents have the same cost and see the effect of the equilibrium data contribution as we vary the cost $c$ and the total number of agents $n$. When not explicitly mentioned, we used the following default parameters: optimal accuracy of $a_{opt} = 0.95$, marginal cost $c_i = 0.1$, participants $n=10^4$. Under all parameter configurations of this experiment, the optimal individual contribution is $m_i^\ast = 0$, while the equilibrium data contributions are significantly larger as expected, validating our theory.

\subsection{Incentive compatibility and distribution of surplus}
One of our motivating reasons for preventing free-riding was to ensure that none of the participating agents feel taken advantage of.  That is, we wish to satisfy some notion of fairness. However, there may potentially be new sources of unfairness in \eqref{eqn:opt-mechanism}. In particular, consider two agents, $i, j \in [n]$, with different costs:
$    \text{if } c_i \leq c_j \text{ then } m_i^\ast \geq m_j^\ast\,.$
Here, an agent $i$ with smaller cost $c_i$ faces two disadvantages under mechanism \eqref{eqn:opt-mechanism}: (i) they have a larger threshold amount of data $m_i^\ast$ they have to contribute before receiving any benefit, and (ii) they receive a smaller increase in accuracy $(c_i + \varepsilon)$ for each additional data point computed. 

If the cost for generating each data point is inherently fixed (such as the cost of driving a vehicle) this is arguably not an issue. However, in many other settings an agent may innovate and develop new methods to reduce their cost of collecting a data point. In fact, the  business model of large internet advertising providers is based on systems which can cheaply capture consumer data in order to show them better advertisements. Would our data-sharing mechanism \eqref{eqn:data-max} disincentivize agents from such innovations? We show this is in fact not true.
\begin{restatable}[Incentive compatibility]{theorem}{theoremunchangedutilities}
Under given costs $\vc$, consider our optimal mechanism \eqref{eqn:opt-mechanism} with equilibrium contributions $\vm^{\max}$, and agents working individually with equilibrium contributions of $\vm^\ast$. The utility of the every agent $i$ remains unchanged:
\[
    a(\textstyle\sum_{j}m^{\max}_j) - c_i m^{\max}_i = a(m_i^\ast) - c_i m_i^\ast\,.
\]
\end{restatable}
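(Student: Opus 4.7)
The plan is to evaluate the equilibrium utility under the optimal mechanism in \eqref{eqn:opt-mechanism} directly, using the defining equation of $m_i^{\max}$, and compare with the individual utility $a(m_i^\ast) - c_i m_i^\ast$. By Theorem~\ref{thm:opt-mechanism}, the unique Nash equilibrium has every agent contributing $m_i^{\max}$, so the agent sits exactly at the boundary where the second and third pieces of \eqref{eqn:opt-mechanism} meet. By construction of $m_i^{\max}$ these two pieces agree there, which gives the identity
\[
a\!\left(\sum_{j} m_j^{\max}\right) \;=\; a(m_i^\ast) + (c_i + \varepsilon)\,(m_i^{\max} - m_i^\ast).
\]

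First, I would substitute this identity into the expression for agent $i$'s utility at the equilibrium,
\[
u_i(\vm^{\max}) \;=\; a\!\left(\sum_{j} m_j^{\max}\right) - c_i\, m_i^{\max} \;=\; a(m_i^\ast) - c_i\, m_i^\ast \;+\; \varepsilon\,(m_i^{\max} - m_i^\ast),
\]
where the cross term $c_i m_i^{\max}$ cancels after expanding $(c_i+\varepsilon)(m_i^{\max}-m_i^\ast)$. Since $m_i^{\max} \geq m_i^\ast$ (as noted in Theorem~\ref{thm:opt-mechanism}), the residual $\varepsilon\,(m_i^{\max}-m_i^\ast)$ is non-negative and vanishes in the limit $\varepsilon \to 0^+$, which is exactly the regime under which Theorem~\ref{thm:opt-mechanism} asserts data-maximization. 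This yields the claimed equality and, in passing, recovers IR (Def.~\ref{def:ir}) for positive $\varepsilon$.

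Second, I would handle the degenerate case $m_i^\ast = 0$ separately for completeness: here \eqref{eqn:opt-mechanism} reduces to compensating each contributed point at rate $c_i + \varepsilon$ starting from $a(0)=0$, so the same substitution gives residual $\varepsilon\, m_i^{\max}$, which again vanishes in the limit and matches $u_i^\ast = 0$.

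The argument is essentially a single algebraic cancellation, so there is no substantive obstacle. The only thing to be careful about is interpreting ``remains unchanged'' in the $\varepsilon \to 0^+$ sense consistent with Theorem~\ref{thm:opt-mechanism}; for any fixed $\varepsilon > 0$ the agent strictly gains $\varepsilon\,(m_i^{\max} - m_i^\ast)$ above their standalone utility, and it is this gain that shrinks to zero as the mechanism becomes data-maximizing. The conceptual content of the theorem is that all the extra accuracy $a(\sum_j m_j^{\max}) - a(m_i^\ast)$ delivered by the mechanism is consumed exactly by the extra cost $c_i(m_i^{\max}-m_i^\ast)$ the agent pays to generate the additional data, so innovations that lower $c_i$ (and thereby raise $m_i^\ast$ and $u_i^\ast$) translate one-for-one into higher equilibrium utility, preserving incentives to reduce marginal cost.
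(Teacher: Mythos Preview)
Your proposal is correct and takes essentially the same approach as the paper: both exploit the fact that on $[m_i^\ast, m_i^{\max}]$ the mechanism is linear with slope $c_i+\varepsilon$, so the utility changes by exactly $\varepsilon(m_i^{\max}-m_i^\ast)$ over that interval and the claim follows in the limit $\varepsilon\to 0^+$. The only cosmetic difference is that the paper phrases this via the derivative $u_i'(m_i;\cM)=\varepsilon$ being constant (hence utility is constant at $\varepsilon=0$), whereas you substitute the boundary identity defining $m_i^{\max}$ directly; these are the same computation.
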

Thus, our mechanism does not induce any distortions in the incentive structure. Further, recall by Theorem~\ref{thm:ind}, the utility $u_i(m_i^\ast) \geq u_j(m_j^\ast)$ if $c_i \leq c_j$.  This implies that users with smaller costs continue to receive a higher utility, encouraging them to innovate and reduce the costs; i.e., our mechanism is incentive compatible. Of course this is assuming that the costs incurred by an agent is verifiable. They cannot lie about the true cost, but may be able to choose between different collection strategies.
\begin{remark}[Distribution of surplus]
One may ask where the additional surplus which is generated by agents collaborating has disappeared, since the agents receive none of it. Our mechanism utilizes this surplus in order to extract additional data, $m_i^{\max} - m_i^\ast$, from the agents. Thus, all the additional surplus goes into improving the accuracy of the model and hence to the end consumers of the model.
\end{remark}
Finally, in Appendix~\ref{sec:information_rent}, we also show how to extend our framework to the setting of unverifiable costs. Most of our conclusions translate to this setting as well.






\section{Discussion}\label{sec:discussion}
 We have initiated the study of mechanism design for data sharing, where the goal is to maximize the amount of data collected and the accuracy of the final model trained. We showed that the standard scheme of sharing each agent's data contribution with everyone else will inevitably lead to catastrophic free-riding where at most a single agent is contributing any data. In particular, this implies that when a learning problem is too difficult or expensive for a single agent to solve on their own, it will remain insurmountable under naive data sharing. Instead, more successful collaboration occurs if the data shared is tuned to the contributions of each of the agents. Our analysis suggests using a credible threat---agents receive the full benefit if they honestly submit data, but free-riding behavior is penalized by reducing the quality of the model returned. We presented examples of such schemes depending on whether the costs are verifiable or not. 
Our framework and results can also be extended to compensate for additional fixed joining costs, to allow the accuracy functions $\{a_i(m)\}$ to be different across the agents, and to employ general increasing convex cost functions $\{c_i(m)\}$. However, a significant limitation is that we needed the data points to be \emph{exchangeable}; i.e., for an agent $i$, every data point is identical in value and cost. A more general setting of heterogeneous data may exhibit qualitatively different behavior. We also assumed that the value function $a(m)$ is known and did not consider any privacy or data locality constraints. Under such constraints, we would have to also incorporate data verification into our mechanism design to prevent agents from submitting fake data points. 

Despite such limitations, our proposed framework already yields rich insights on the design of data-sharing mechanisms, and we believe that it has even broader implications. The federated learning community has recently been grappling with questions such as how to compensate the participants for their contributions~\citep{zhan2021survey}, incorporating incentives~\citep{zhan2021survey}, fairness constraints~\citep{shi2021survey}, and addressing privacy concerns~\citep{li2021survey}. While numerous mechanisms with varying definitions for each of these goals have been proposed, a principled manner to compare them is absent. We believe that the data-maximization viewpoint may provide such a principled framework---the idea is to come up with a mechanism which in the long-term induces agent behavior that maximizes data (or, more generally, value) creation.





\bibliographystyle{plainnat}
\bibliography{bibliography}

\newpage
\appendix



\part*{Appendix}



\section{Further Review on the Related Work and Contract Theory Background}\label{app:related_work}

The literature on mechanism design and federated learning is vast. We discussed the most closely related work in three verticals in the main text; we include a detailed review of the broader literature in this section.

Over the past decade, federated learning (FL) has emerged as an important paradigm in modern large-scale machine learning~\citep{konevcny2016federated, kim2019blockchained, kairouz2021advances, li2020review, rieke2020future}. Specifically, FL research has resulted in many applications to overcome practical challenges such as data silos and data sensitivity: on one side, since more training data often gives better model performance, data silos results in scarcity of labeled training data and puts limit on the industrial performance; on the other side, in high-stakes applications the data may contain private user information and thus the sharing of data is constrained by regulations and laws~\citep{voigt2017eu, baik2020data, cheng2020federated, mancini2021data}. Given these challenges, FL provides a useful scheme for different agents / parties to train collaboratively and leverage the benefit from other agents' data, while the training data remains distributed over the agents. Such a framework has been shown to be able to bring improved model performance to all the participants. Indeed, many prior works have been devoted to develop more scalable and communication-efficient distributed optimization algorithms for FL~\citep{konevcny2016federatedb,mcmahan2017communication, bonawitz2019towards}.

However, one cannot ignore an important aspect in the standard FL scheme, which is the incentives aspect. The standard FL scheme may incentivize strategic agents to contribute less data in order to minimize their data collection cost and maximize the gain from participating in the federated learning mechanism. Although the participation and contribution of each agent is often legislated by certain protocols, such free-riding behavior has been notoriously hard to regulate and prevent in practice~\citep{fraboni2021free, huang2020exploratory}. Recently, a few works have started to explore such free-riding behavior in FL, with various incentive models proposed~\citep{richardson2020budget, sarikaya2019motivating,lin2019free,fraboni2021free,ding2020incentive, zhang2022enabling}. However, the majority this work has focused on a taxonomy of free-rider attacks or the detection of attacks under the existing FL scheme, instead of proposing mechanisms that incentivize maximal data contribution. In this work, we strive for a mechanism for information sharing under the standard federated learning setting such that rational agents are incentivized to contribute their maximal amount of data.

In this work, we focus on the free-riding behavior of FL agents in terms of data collection. In FL, the data collection happen on the agents' side before they join the mechanism for training models. Therefore, the cost of collecting data is often \textit{private} information to each agent. Such an information asymmetry brings difficulty to prevent free-riding, because the agents might simply report fake costs. This brings the need to design \textit{incentive} mechanisms for FL, under which the agents are incentivized to behave truthfully, which is also guaranteed to lead to the best utility. 

Indeed, designing incentive mechanisms under private costs is not new, and has been a main focus of the contract theory literature~\citet{smith2004contract, laffont2009theory, bolton2004contract}. Moreover, the existence of a central server (a ``principal") in FL brings further convenience to apply a principle-agent model. An emerging line of recent works have been exploring the application of contract theory for federated learning~\citep{kang2019toward,kang2019incentivea,kang2019incentiveb, lim2021towards, tian2021contract, cong2020game, zhan2020learning}. In particular,~\citet{tian2021contract} proposed a contract-based aggregator under a multi-dimensional contract model over two possible types of agents and showed improved model generalization accuracy under that contract. However, their mechanism focused on eliciting the private type information instead of maximizing the data contribution. To the best of our knowledge, our work is a first step to use contract theory for \textit{data maximization} in federated learning. Further, prior work has focused on how to design payments to agents, rather than the accuracy-shaping problem that we focus on here.

This work is related to the active line of research on mechanism design for collaborative machine learning, which involves multiple parties each with their own data, jointly training a model or making predictions in a common learning task~\citep{sim2020collaborative, xu2021gradient}. 
In collaborative machine learning, a major focus has been the design of model rewards (i.e., data valuation) in order to ensure certain fairness or accuracy objectives. Towards that goal, there has been model rewards proposed based on notions from the cooperative game theory literature such as the Shapley value~\citep{jia2019towards, wang2020principled}. However, the guarantees of these model rewards depend on the assumption that the agents are already willing to contribute the data they have. In this work, we study a different incentivization task for data maximization.

More broadly, apart from data maximization, there are  other objectives which are of interest in federated learning, such as fairness and welfare objectives, that have been under active study~\citep{donahue2021optimality, donahue2020model, mohri2019agnostic}. We defer a thorough analysis of the tradeoffs among various objectives to future research.


\section{Data Maximization under Unverifiable Costs}\label{sec:information_rent}
Until now, we assumed that the cost of all agents is known to everyone involved, or is atleast verifiable. In some settings where the costs are universal and outside the control of the agent, this assumption may be justified. 
However, in numerous other cases, the exact process of the data generation may be a trade secret and so there is uncertainty about the cost incurred by an agent. In this section, we examine how to incorporate such uncertainties into our mechanism.


We focus on the simplest version of this uncertainty. Suppose that we know that the cost of each agent can either be low ($\low c$) or high ($\high c$). Further, suppose we have some prior knowledge where agent $i$ has low cost $\low c$ with probability $p_i$ and $\high c$ with $(1-p_i)$. Note that there is an inherent \emph{information asymmetry} in this setting. The agent knows the realization of their cost, $c_i \in \cbr{\high c, \low c}$, whereas the server only knows the distribution from which it was drawn. In particular, the server needs to present a mechanism $\cM$ to an agent without knowing their actual cost.

\subsection{Mechanism description}
Suppose each agent independently selects their cost to be low $(c_i = \low c)$ with probability $p_i$. Let an agent with low cost $\low c$ generate $\low m^\ast$ data points at equilibrium on their own (and correspondingly define $\high m^\ast$ for a high-cost agent). Then, for some small $\varepsilon > 0$ and $\high m^\ast \leq m_i^\uparrow \leq m_i^\downarrow$, consider the following mechanism (illustrated in Figure~\ref{fig:accuracy-shaping-unknown-costs})
    \begin{equation}\label{eqn:unknonw-opt-mechanism}
        [\cM(\vm)]_i = 
            \begin{cases}
                a(m_i) \quad &\text{for } m_i \leq  \high m^\ast \\
                a(\high m^\ast) + (\high c + \varepsilon)(m_i - \high m^\ast)  \quad &\text{for } m_i \in [\high m^\ast, m_i^\uparrow] \\
                a(m_i^{\downarrow} + \sum_{j\neq i} m_j)  - (\low c + \varepsilon)(m_i^\downarrow - m_i)  \quad &\text{for } m_i \in [m_i^\uparrow, m_i^\downarrow] \\
                a(\sum_j m_j)  \quad &\text{for } m_i \geq m_i^\downarrow \,.
            \end{cases}
    \end{equation}
    \begin{wrapfigure}{r}{8cm}
    \centering
    \vspace{-1.1em}
    \includegraphics[width=\linewidth]{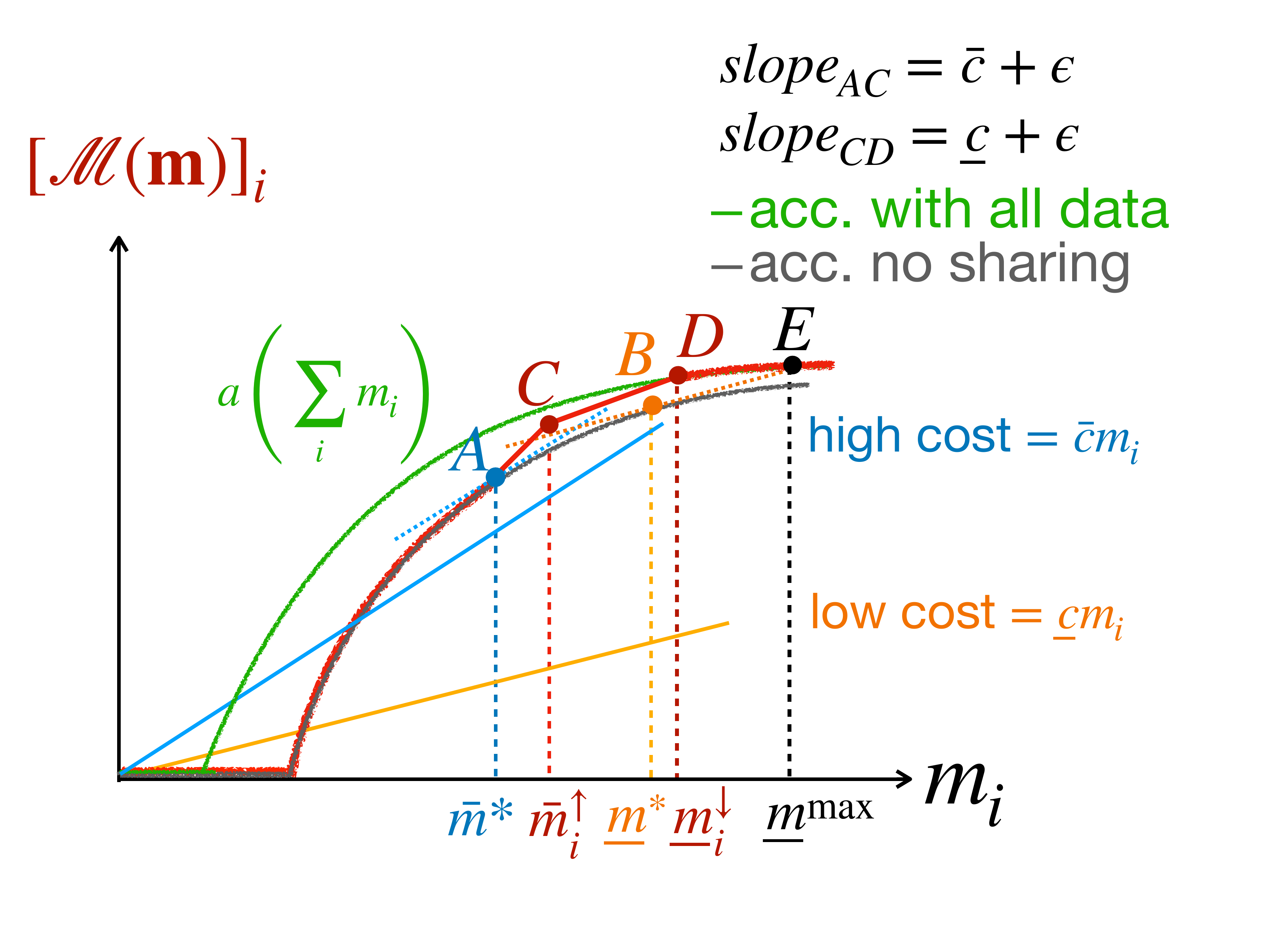}\vspace{-1.1em}
    \caption{Accuracy shaping mechanism under unknown costs. \textit{(red curve)}: model accuracy returned to agent $i$ by the mechanism; \textit{(grey curve)}: model accuracy for agent $i$ without participation; \textit{(green curve)}: model accuracy if agent $i$ receives all the data from the other agents.}
    \label{fig:accuracy-shaping-unknown-costs}
\end{wrapfigure}
    Recall from Theorem~\ref{thm:ind} that $\low m^\ast \geq \high m^\ast$ since $\low c \leq \high c$. Thus, agents with either costs do not need additional incentive to collect data up to $\high m^\ast$. Now, consider a high-cost agent. After $\high m^\ast$, they need a marginal gain in accuracy of at least $\high c$ which they do not get on their own. Additional supplementary data is provided by \eqref{eqn:unknonw-opt-mechanism} until $m_i^\uparrow$ to incentivize a high-cost agent. It is now in their best interest to contribute $m_i^\uparrow$. For the low-cost agent, the marginal gain in accuracy is at least $\low c$ until $m_i^\downarrow$, making this their best contribution. The specific values of $m_i^{\downarrow}$ and $m_i^{\downarrow}$ (points D and C) can then be chosen to maximize the expected data contribution $((1-p_i) m_i^\uparrow + p_i m_i^\downarrow)$.
    
    For $\Delta m_{\text{-}i} := \sum_{j \neq i}m_j$, let $\high m^{\max}$ be the maximum amount of data a high-cost agent can be incentivized to contribute as in \eqref{eqn:opt-mechanism} \ie it is defined to be $a(\high m^{\max} + \Delta m_{\text{-}i}) = a(\high m^\ast) + \high c(\high m^{\max} - \high m^\ast)$, and $\low m^{\max}$ defined correspondingly for the low-cost agent. Then,  we define $m_i^\downarrow$ (point D) to satisfy
    \begin{equation}\label{eqn:unknown-compromise}
        a'(m_i^\downarrow + \Delta m_{\text{-}i}) = \min\rbr[\Big]{\max\rbr[\Big]{ \low c - \tfrac{p }{1-p}\high c \ ,\  a'(\low m^{\max} +\Delta m_{\text{-}i}} \,,\, a'(\high m^{\max} +\Delta m_{\text{-}i})}\ .
    \end{equation} 
    Then, we can define $m_i^\uparrow$ (point C) as the intersection of the two linear curves (starting from A and D in Fig~\ref{fig:accuracy-shaping-unknown-costs}): 
    \begin{equation}\label{eqn:unknown-feasibility}
    a(m_i^{\downarrow} + \textstyle\sum_{j\neq i} m_j)  - (\low c + \varepsilon)(m_i^\downarrow - m_i^\uparrow) = a(\high m^\ast) + (\high c + \varepsilon)(m_i^\uparrow - \high m^\ast) \,.
    \end{equation}
    Note that our mechanism withholds some data from a high-cost agent resulting in a lower accuracy model for them. This is necessary to prevent a contribution level targeted at high-cost agent from becoming attractive to a low-cost agent. 
    
    \subsection{Analysis} We now analyze the properties of our expected data-maximization algorithm.
\begin{restatable}[Expected data maximization]{theorem}{theoremunknownoptmechanism}\label{thm:unknonw-opt-mechanism}
    Mechanism \eqref{eqn:unknonw-opt-mechanism}  is feasible, satisfies IR, and has a unique Nash equilibrium: $m_i^\text{eq} = m_i^\uparrow$ if $c_i = \high c$ and otherwise $m_i^\text{eq} = m_i^\downarrow$.
    Further, for $\varepsilon \rightarrow 0^+$, the mechanism \eqref{eqn:unknonw-opt-mechanism} maximizes the expected (over the sampling of the true costs) amount of data collected with
    \[
        \textstyle\sum_j (1-p_j) m_j^\uparrow + p_j m_j^\downarrow = \max_{\cM}\cbr*{ \textstyle\sum_j \E_{\vc}\sbr{m_j^\cM}\,, \text{ subject to $\cM$ being feasible and IR}}\,.
    \]
\end{restatable}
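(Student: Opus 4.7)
The plan is to verify the three claims—feasibility plus IR, uniqueness of the Bayesian Nash equilibrium, and expected-data maximality—by reducing each to a case analysis on the four pieces of \eqref{eqn:unknonw-opt-mechanism}, concluded by a principal-agent optimization for the final step.

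First I would verify feasibility and IR piece by piece. Writing $\Delta m_{-i} := \sum_{j \neq i} m_j$, feasibility requires $[\cM(\vm)]_i \leq a(m_i + \Delta m_{-i})$: on pieces 1 and 4 it is immediate by monotonicity of $a$; on piece 2 it reduces to showing that the linear chord of slope $\high c + \varepsilon$ through $(\high m^\ast, a(\high m^\ast))$ lies below $m_i \mapsto a(m_i + \Delta m_{-i})$ on $[\high m^\ast, m_i^\uparrow]$, which follows from concavity of $a$ together with the definition of $\high m^{\max}$ and the bound $m_i^\uparrow \leq \high m^{\max}$ implied by \eqref{eqn:unknown-feasibility} and \eqref{eqn:unknown-compromise}; on piece 3 the analogous chord inequality follows from the compromise condition \eqref{eqn:unknown-compromise} together with the pasting identity \eqref{eqn:unknown-feasibility}. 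IR reduces to $[\cM(\vm)]_i \geq a(m_i)$, which is trivial on piece 1, follows on pieces 2--3 from the observation that the piecewise-linear extension sits above the tangent of the concave curve $a$ at the pasting points $\high m^\ast$ and $m_i^\downarrow$, and is immediate on piece 4.

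Next I would establish the equilibrium structure by computing, for each type $c_i \in \{\low c, \high c\}$ and fixed $\vm_{-i}$, the slope of $u_i(m_i) = [\cM(\vm)]_i - c_i m_i$ on each piece. For a high-cost agent, piece 1 is concave with maximizer $\high m^\ast$ by definition of that quantity, piece 2 has constant slope $+\varepsilon$, piece 3 has slope $\low c - \high c + \varepsilon < 0$ for small $\varepsilon$, and piece 4 has non-positive slope by the choice of $m_i^\downarrow$ in \eqref{eqn:unknown-compromise}; hence the unique maximizer is $m_i^\uparrow$. For a low-cost agent, piece 1 is strictly increasing since $\low m^\ast \geq \high m^\ast$, piece 2 has slope $\high c - \low c + \varepsilon > 0$, piece 3 has slope $+\varepsilon$, and piece 4 has slope $a'(\sum_j m_j) - \low c \leq 0$ again by \eqref{eqn:unknown-compromise}; hence the unique maximizer is $m_i^\downarrow$. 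Strict non-zero slopes for $\varepsilon > 0$ deliver uniqueness, and a symmetric pure-strategy equilibrium exists by an adaptation of Theorem~\ref{thm:equilibrium}, since each agent's best response depends on $\vm_{-i}$ only through the aggregate $\Delta m_{-i}$ via the breakpoints.

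Finally, to show expected-data maximality as $\varepsilon \to 0^+$, I would invoke the revelation principle: for any feasible IR mechanism, at its Bayesian equilibrium each agent's behavior can be described by a two-point menu $\{(m^\uparrow, a^\uparrow), (m^\downarrow, a^\downarrow)\}$ indexed by the realized type, subject to truthful-reporting IC constraints $a^\uparrow - \high c m^\uparrow \geq a^\downarrow - \high c m^\downarrow$ and $a^\downarrow - \low c m^\downarrow \geq a^\uparrow - \low c m^\uparrow$, IR constraints $a^\uparrow \geq a(m^\uparrow)$ and $a^\downarrow \geq a(m^\downarrow)$, and feasibility $a^\uparrow \leq a(m^\uparrow + \Delta m_{-i})$ and $a^\downarrow \leq a(m^\downarrow + \Delta m_{-i})$. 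Maximizing $(1-p_i) m^\uparrow + p_i m^\downarrow$ over these constraints yields a program whose KKT stationarity conditions, after substituting the binding upper envelopes for $a^\uparrow$ and $a^\downarrow$, coincide exactly with \eqref{eqn:unknown-compromise} and \eqref{eqn:unknown-feasibility}, so our construction attains the optimum. The main obstacle will be exactly this final step, specifically handling the corner cases packaged inside the $\min/\max$ of \eqref{eqn:unknown-compromise}: depending on which of the three terms binds, either the high-cost IR slackens (so $m_i^\downarrow$ is pushed up to the feasibility boundary at $\low m^{\max}$), feasibility binds (capping $m_i^\downarrow$ at $\high m^{\max}$), or the low-cost downward IC binds and the two allocations are held apart by the classical ``information rent'' factor $\tfrac{p}{1-p}\high c$; checking that this case split is exhaustive and matches the closed form of \eqref{eqn:unknown-compromise} is the technically delicate part, while the piecewise feasibility/IR verification and the monotone-slope argument for the equilibrium are routine by comparison.
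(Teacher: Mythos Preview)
Your proposal is correct. The feasibility/IR verification and the piecewise slope computation for the equilibrium are essentially the paper's argument (and in fact more explicit than the paper, which does not separately verify feasibility and IR).

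For the optimality step, however, you take a genuinely different route. The paper argues in two stages: first a variational perturbation within the parametric family \eqref{eqn:unknonw-opt-mechanism}---pushing $m_i^\downarrow$ down by $dx$ raises $m_i^\uparrow$ by $dx/\high c$ and lowers $m_i^\downarrow$ by $dx/(\low c - b'(m_i^\downarrow + \Delta m_{-i}))$, and balancing the expected change yields \eqref{eqn:unknown-compromise}---and then a geometric argument that any competing mechanism $\tilde\cM$ with equilibrium pair $(\tilde m, \underaccent{\tilde}{m})$ can be improved by forcing the chord slopes between $\high m^\ast$, $\tilde m$, and $\underaccent{\tilde}{m}$ to equal exactly $\high c$ and $\low c$, so the optimal mechanism must lie in the family \eqref{eqn:unknonw-opt-mechanism}. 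You instead invoke the revelation principle to reduce to a two-point direct menu with IC, pointwise IR, and feasibility constraints, and solve the resulting program via KKT. Your approach is the textbook screening argument and makes the role of the information-rent term $\tfrac{p}{1-p}\high c$ transparent; the paper's approach is more pictorial and avoids the machinery of Lagrangians but is correspondingly less systematic about the corner cases you flag. Both land on the same formula, and your relaxation argument (any feasible IR mechanism induces a menu satisfying your listed constraints, so the menu optimum upper-bounds the true optimum, and \eqref{eqn:unknonw-opt-mechanism} attains it) is the cleaner way to close the gap between necessary and sufficient.
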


\begin{remark}[Decreased data collection]
By construction of our mechanism, the contribution of a high-cost agent would be $m_i^\uparrow \in [\high m^\ast, \high m^{\max}]$ \ie they contribute more than they would on their own, but lesser than the max possible under known costs. Further, our assumption that $a(\cdot)$ is concave means $a'(\cdot)$ is non-increasing. Hence, \eqref{eqn:unknown-compromise} implies that the data contributed by a low-cost agent is $m_i^\downarrow \in [\high m^{\max}, \low m^{\max}]$. However, if $p_i \geq \frac{\low c}{\low c + \high c}$, \eqref{eqn:unknonw-opt-mechanism} always implies that $m_i^\downarrow = \low m^{\max}$.
\end{remark}
We extract lesser data than if we knew the agent's true cost \ie $m_i^\uparrow \leq \high m^{\max}$. However, they also receive a model which has worse accuracy with $a(\high m^\ast) + \high c(m_i^\uparrow - \high m^\ast) \leq a(\sum_j m_j)$ \ie it is not trained on the combined data. This is because if we offered a full accuracy model to a high-cost agent at $\high m^{\max} \leq m_i^\downarrow$ contribution, the low-cost agent can claim they are actually high-cost and cheat our system.
Instead, now the low-cost agent will contribute $m_i^\downarrow \geq \high m^{\max}$ and will receive a model trained on the combined data with accuracy $a(\sum_{j} m_j)$. 
\begin{restatable}[Information rent]{theorem}{theoreminformationrent}
Consider our optimal mechanism \eqref{eqn:unknonw-opt-mechanism} with equilibrium contributions $m_i^{eq} = m_i^\uparrow$ for a high-cost agent and $m_i^{eq} = m_i^\downarrow$ for the low-cost agent. Further, let $\high m^\ast$ and $\low m^\ast$ be the equilibrium individual contributions. Then, the utility of the high-cost agent remains unchanged with 
$
    a(\high m^\ast) + \high c(m_i^\uparrow - \high m^\ast) - \high c m_i^\uparrow = a(\high m^\ast) - \high c \high m^\ast\,.
$
The utility of a low-cost agent, however, improves by $\left(\low c(\low m^{\max} - m_i^\downarrow) - a(\low m^{\max} + \Delta m_{\text{-}i}) + a( m_i^\downarrow+ \Delta m_{\text{-}i}) \right) \geq 0$.
\end{restatable}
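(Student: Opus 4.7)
I would prove the two utility statements separately. The high-cost claim reduces to reading the equilibrium accuracy off the mechanism followed by an algebraic cancellation; the low-cost claim then requires identifying the baseline via the defining relation for $\low m^{\max}$ and verifying a concavity inequality using the slope bound implicit in \eqref{eqn:unknown-compromise}.

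First, for a high-cost agent, Theorem~\ref{thm:unknonw-opt-mechanism} gives equilibrium contribution $m_i^\uparrow$, and the second piece of \eqref{eqn:unknonw-opt-mechanism} at $m_i = m_i^\uparrow$ yields accuracy $a(\high m^\ast) + \high c(m_i^\uparrow - \high m^\ast)$ as $\varepsilon \to 0^+$. Subtracting the cost $\high c\, m_i^\uparrow$ gives $a(\high m^\ast) + \high c(m_i^\uparrow - \high m^\ast) - \high c\, m_i^\uparrow = a(\high m^\ast) - \high c\, \high m^\ast$ after the $\high c\, m_i^\uparrow$ terms cancel; by Theorem~\ref{thm:ind} this is exactly the utility such an agent obtains working alone, so the utility is ``unchanged.''

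Second, for a low-cost agent, Theorem~\ref{thm:unknonw-opt-mechanism} gives equilibrium contribution $m_i^\downarrow$, and both the third and fourth pieces of \eqref{eqn:unknonw-opt-mechanism} at $m_i = m_i^\downarrow$ return accuracy $a(m_i^\downarrow + \Delta m_{\text{-}i})$, producing utility $a(m_i^\downarrow + \Delta m_{\text{-}i}) - \low c\, m_i^\downarrow$. The baseline individual-equilibrium utility is $a(\low m^\ast) - \low c\, \low m^\ast$, and the defining equation of $\low m^{\max}$, namely $a(\low m^{\max} + \Delta m_{\text{-}i}) = a(\low m^\ast) + \low c(\low m^{\max} - \low m^\ast)$, lets me rewrite this baseline as $a(\low m^{\max} + \Delta m_{\text{-}i}) - \low c\, \low m^{\max}$. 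Subtracting baseline from the mechanism utility and collecting terms reproduces the stated improvement expression verbatim.

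The main obstacle is showing this improvement is nonnegative. The remark on decreased data collection preceding the theorem gives $m_i^\downarrow \in [\high m^{\max}, \low m^{\max}]$, so $\low m^{\max} - m_i^\downarrow \geq 0$, and concavity of $a$ yields
\[
a(\low m^{\max} + \Delta m_{\text{-}i}) - a(m_i^\downarrow + \Delta m_{\text{-}i}) \leq a'(m_i^\downarrow + \Delta m_{\text{-}i})\,(\low m^{\max} - m_i^\downarrow).
\]
It then suffices to establish $a'(m_i^\downarrow + \Delta m_{\text{-}i}) \leq \low c$. From \eqref{eqn:unknown-compromise} this slope equals the $\min$ of $a'(\high m^{\max} + \Delta m_{\text{-}i})$ with a $\max$ of two terms: the first, $\low c - \tfrac{p}{1-p}\high c$, is trivially at most $\low c$, and the second, $a'(\low m^{\max} + \Delta m_{\text{-}i})$, is at most $\low c$ because the secant from $\low m^\ast$ to $\low m^{\max}$ has slope exactly $\low c$ by the defining equation of $\low m^{\max}$, and $a$ is concave. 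The outer $\min$ preserves this upper bound. Substituting into the concavity inequality shows the improvement is at least $\low c(\low m^{\max} - m_i^\downarrow) - \low c(\low m^{\max} - m_i^\downarrow) = 0$, completing the argument.
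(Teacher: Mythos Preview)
Your proposal is correct and follows essentially the same approach as the paper: both argue the high-cost case by reading off the accuracy from the second branch of \eqref{eqn:unknonw-opt-mechanism} and cancelling, and both argue the low-cost case by computing the equilibrium utility $a(m_i^\downarrow + \Delta m_{\text{-}i}) - \low c\, m_i^\downarrow$, comparing it to the baseline rewritten as $a(\low m^{\max} + \Delta m_{\text{-}i}) - \low c\, \low m^{\max}$, and then using the slope bound $a'(\cdot + \Delta m_{\text{-}i}) \leq \low c$ on $[m_i^\downarrow,\low m^{\max}]$ to obtain nonnegativity. Your version is more explicit than the paper's in two places---you spell out the rewriting of the baseline via the defining equation of $\low m^{\max}$, and you derive the slope bound directly from \eqref{eqn:unknown-compromise} rather than simply asserting it---but the logical skeleton is the same. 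One small wording issue: your ``secant from $\low m^\ast$ to $\low m^{\max}$'' justification for $a'(\low m^{\max}+\Delta m_{\text{-}i})\leq \low c$ mixes two different curves (the points $a(\low m^\ast)$ and $a(\low m^{\max}+\Delta m_{\text{-}i})$ do not lie on the same concave graph); the cleaner argument is simply that $\low m^{\max}+\Delta m_{\text{-}i}\geq \low m^\ast$ together with concavity and $b'(\low m^\ast)=\low c$ gives the bound directly.
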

Because a low-cost agent can always lie and pretend to be high cost, they hold some power over the server when $m_i^\downarrow < \low m^{\max}$. This is reflected in the extra utility they manage to extract and is called information rent. The utility of the high-cost agent remains unchanged since they hold no such power.

\begin{example}[Computing decreased contributions]\label{example:unknown-max-opt}
Consider the accuracy function arising from the generalization guarantees in Example~\ref{example:ERM}, and $n$ agents whose marginal cost is chosen uniformly (i.e. $p_i = 1/2$) from the set $\{\low c, \high c \}$. Also suppose that $m^\ast > 0$ i.e. the cost satisfies $c \leq a_{opt}^3/3\low c$. In this case, we know from Example~\ref{example:ERM} that the individual contribution is $\high m^\ast = \sqrt[3]{k/(\high c)^2}$ and correspondingly $\low m^\ast = \sqrt[3]{k/\low c^2}$. For $p_i = 1/2$, \eqref{eqn:unknonw-opt-mechanism} implies that $m^\downarrow = \low m^{\max}$. Further, the contribution of the high-cost agent is $m^\uparrow = \gamma_t \high m^{\ast} + (1 - \gamma)\low m^{\ast}$ for some $\gamma \in [0,1]$. Thus, the high-cost agent's contribution is in between the individual contributions, while the low-cost agent contributes the maximum amount they would have even if we knew their true cost. 
\end{example}

\section{Proofs from Section~\ref{sec:framework} (Optimal Individual Contributions)}

\individualtheorem*
\begin{proof}
Recall that the utility function (see Eq.~\ref{eqn:utility}) of a single agent is:
\[
u_i(m_i) = a(m_i) - c_i m_i. 
\]
Thus we have,
\[
u_i'(m_i) = a'(m_i) - c_i. 
\]
Denote $\argmax_{m} a(m) =0$ as $m^0$. By definition, $\forall m_i > m^0$, $a(m_i) = b(m_i) > 0$.
Given that $b(\cdot)$ is concave, $b'(m_i), m_i \geq m^0$ (or $u_i'(m_i)$) is maximized when $m_i = m^0$. 
\paragraph{Case 1 (high-cost agent): $u_i'(m^0) \leq 0$.} Then, for $\forall m_i \geq m^0$, $u_i'(m_i) \leq u_i'(m^0) \leq 0$. On the other hand, $\forall 0 \leq m_i \leq m^0$, $u_i'(m_i) = -c_i \leq 0$. Thus $u_i(m_i)$ is non-increasing, and $m_i^\ast = 0$. The utility function of an agent in this case is illustrated in Figure~\ref{fig:single-agent} (d).
\paragraph{Case 2 (mid-cost agent): $u_i'(m^0) > 0$ and $\max_{m_i} u_i(m_i) \leq 0$.} When $u_i'(m^0) > 0$, that implies that at $m^0$, $b'(m^0) > c_i$. Moreover, for $m_i \geq m^0$, we have that 
\[
u_i'(m_i) = b'(m_i) - c_i < b'(m^0) - c_i.
\]
Therefore, since $b(m_i)$ is concave, it is possible that $u_i(m_i)$ increases first after $m^0$. However, as long as $\max_{m_i} u_i(m_i) \leq 0$, we still have that $m_i^\ast = 0$. The utility function of an agent in this case is illustrated in Figure~\ref{fig:single-agent} (c).

\paragraph{Case 3 (low-cost agent): $u_i'(m^0) > 0$ and $\max_{m_i} u_i(m_i) > 0$.} Recall that for a mid-cost agent, it is possible that $u_i(m_i)$ increases first after $m^0$. Moreover, given that $a(m_i) \leq 1$, as $m_i \rightarrow \infty$, 
\[
u_i'(m_i) = b'(m_i) - c_i \leq 0.
\]
Therefore, there exists $\alpha_i^\ast > m^0 > 0$ such that $b'(\alpha_i^\ast) = c_i$. The utility function of an agent in this case is illustrated in Figure~\ref{fig:single-agent} (b).

Combining the three cases above completes the first part of the proof. 

Next, consider two agents with costs $c_i \leq c_j$. Note that for any fixed $m$, $u_i(m) \geq u_j(m)$. Hence, the inequality also holds after minimizing both sides. Finally, note that if $j$ is not a low-cost agent, it is clear that $m_i^\ast \geq m_j^\ast = 0$. If both $i$ and $j$ are low-cost agents, note that $m_i^\ast = b'^{-1}(c_i)$ and $m_j^\ast = b'^{-1}(c_j)$. Since $b(\cdot)$ is concave and positive, $b'$ (and hence $b'^{-1}$) is non-increasing. This implies that $m_i^\ast \geq m_j^\ast$ finishing the theorem.
\end{proof}

\section{Proofs from Section~\ref{sec:free_ride} (Modeling Multiple Agents and Catastrophic Free-riding)}
\equlirbiumtheorem*
\begin{proof}
For a set of contributions $\vm$, define the following best response mapping:
\begin{equation}\label{eqn:argmax-equil}
    [B(\vm)]_i := \argmax_{\tilde m_i\geq 0}  \cbr*{u_i(\tilde m_i, \vm_{-i}) := [\cM( \tilde m_i, \vm_{-i})]_i - c_i \tilde m_i }\,,
\end{equation}
where recall $[\cM( \tilde m_i, \vm_{-i})]_i$ is the accuracy returned by the mechanism upon agent $i$ submitting $\tilde m_i$ data points and the rest contributing $\vm_{-i}$. Note that the mapping defined above is a multi-valued function i.e. $B: \R^n \rightarrow 2^{\R^n}$. This is because the $\argmax$ defined above may potentially return multiple values. Nevertheless, suppose that there existed a fixed point to the mapping $B$ i.e. there existed $\tilde \vm$ such that $\tilde \vm \in B(\tilde \vm)$. Then, $\tilde \vm$ is the required equilibrium contribution since by definition of the arg-max  we have for any $m_i \geq 0$, 
\[
    [\cM( \tilde m_i, \tilde\vm_{-i})]_i - c_i \tilde m_i \geq [\cM( m_i, \tilde\vm_{-i})]_i - c_i m_i\,.
\]
So, we only have to prove that the mapping $B$ has a fixed point. Since the mechanism $\cM$ is feasible, by Definition \ref{def:feasible} and equation \eqref{eqn:acc} we have
\[
        [\cM( \vm)]_i \leq a(\textstyle\sum_j m_j) \leq \lim_{m \rightarrow \infty} a(m) \leq 1\,.
\]
This implies that
\[
    0 \geq u_i(\tilde\vm) \leq 1 - c_i \tilde m_i \Rightarrow \tilde m_i \leq \nicefrac{1}{c_i}\,.
\]
Thus, we can restrict our search space to a \emph{convex} and \emph{compact} product set $\cC := \prod_{j}\sbr*{0, \nicefrac{1}{c_i}} \subset \R^n$ and our mapping is then over $B: \cC \rightarrow 2^{\cC}$. Next by assumption on the mechanism $\cM$, our utility function can be written as
\[
    u_i(m_i, \vm_{-i}) = \max(-c_i m_i\,,\, \nu(m_i, \vm_{-i}) -c_i m_i )\,,
\]
where $\nu(m_i, \vm_{-i}) -c_i m_i$ is concave in $m_i$. Unfortunately, $u_i$ may not be quasi-concave in $m_i$ because of the max. If it was quasi-concave, the mapping $B(\vm)$ would be continuous in $\vm$ and applying Kakutani’s theorem would yield the existence of the required fixed point (see~\citet{maskin1986existence} or \citet[Lecture 11]{lecture11}, for details). 
\begin{lemma}[Kakutani’s fixed point theorem]\label{lem:fixed}
Consider a multi-valued function $F: \cC \rightarrow 2^{\cC}$ over convex and compact domain $\cC$ for which the output set $F(\vm)$ i) is convex and closed for any fixed $\vm$, and ii) changes continuously as we change $\vm$. For any such $F$, there exists a fixed point $\vm$ such that $\vm \in F(\vm)$.
\end{lemma}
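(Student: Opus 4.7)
The plan is to derive Kakutani's fixed point theorem from Brouwer's fixed point theorem by constructing a sequence of single-valued continuous approximations to the correspondence $F$, applying Brouwer to each approximation, and then passing to a subsequential limit. I take Brouwer's theorem for continuous self-maps of a compact convex subset of $\R^n$ as the primitive tool, and interpret ``changes continuously'' as upper hemicontinuity, equivalently that the graph $\{(\vm, \vy) : \vy \in F(\vm)\} \subset \cC \times \cC$ is closed; this is the standard formulation, and convex-and-closed values plus closed graph is what Kakutani actually requires.

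First I would build the approximations. For each integer $k \geq 1$, triangulate $\cC$ into simplices of diameter at most $1/k$. At every vertex $v$ of this triangulation, pick an arbitrary point $\phi_k(v) \in F(v)$ (nonemptiness of $F(v)$ is an implicit assumption together with convexity and closedness). Extend $\phi_k$ to a continuous map $f_k : \cC \to \cC$ by linear interpolation within each simplex using barycentric coordinates; since $\cC$ is convex, $f_k$ indeed takes values in $\cC$. Apply Brouwer to obtain a fixed point $\vm_k = f_k(\vm_k)$.

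Next I would pass to the limit. By compactness of $\cC$, extract a subsequence along which $\vm_k \to \vm^\star \in \cC$. Write $\vm_k = \sum_{i=1}^{n+1} \lambda_i^k v_i^k$ in barycentric coordinates over the simplex containing $\vm_k$, so that
\[
\vm_k \;=\; f_k(\vm_k) \;=\; \sum_{i=1}^{n+1} \lambda_i^k\, \phi_k(v_i^k).
\]
By compactness, refine the subsequence so that $\lambda_i^k \to \lambda_i^\star$ and $\phi_k(v_i^k) \to y_i$ for each $i$. Since simplex diameters vanish, $v_i^k \to \vm^\star$; since $\phi_k(v_i^k) \in F(v_i^k)$ and the graph of $F$ is closed, $y_i \in F(\vm^\star)$. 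Convexity of $F(\vm^\star)$ then yields $\sum_i \lambda_i^\star y_i \in F(\vm^\star)$, and taking limits in the displayed identity gives $\vm^\star = \sum_i \lambda_i^\star y_i \in F(\vm^\star)$, establishing the fixed point.

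The main obstacle is the passage to the limit in the final step: one must carefully match the convergence $v_i^k \to \vm^\star$ with the closed-graph property to place each $y_i$ in $F(\vm^\star)$, and then exploit convexity of $F(\vm^\star)$ to absorb the convex combination. This is where both hypotheses on $F$ (closed convex values together with closed graph) are essential; either assumption alone would be insufficient. An alternative route avoiding the explicit triangulation would be to convolve an upper-hemicontinuous selector of $F$ with mollifiers to obtain $C^\infty$ approximations and again apply Brouwer, but the simplicial construction above is the most elementary and transparent path, which is why I prefer it.
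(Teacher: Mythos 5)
Your proposal is a correct and complete rendition of the classical derivation of Kakutani's theorem from Brouwer's theorem via simplicial approximation; the limit argument correctly combines the vanishing mesh, the closed-graph reading of ``changes continuously,'' and convexity of the values. Note, however, that the paper does not prove this lemma at all: it is invoked as a known classical result with citations to \citet{maskin1986existence} and \citet[Lecture 11]{lecture11}, so there is no in-paper argument to compare against. The only point in your write-up that deserves a word of care is the opening step ``triangulate $\cC$'': for the domain actually used in the paper, $\cC = \prod_j [0, \nicefrac{1}{c_j}]$, this is immediate since $\cC$ is a box, but for a general compact convex $\cC \subset \R^n$ (possibly not a polytope, possibly not full-dimensional) one should either triangulate a simplex containing $\cC$ and compose with the nearest-point retraction onto $\cC$, or restrict attention to the affine hull; this is a standard and easily repaired technicality, not a gap in the logic.
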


However, our utility function is not quasi-concave and the mapping $B$ may be discontinuous. While there have been recent extensions of Kakutani’s fixed point theorem to half-continuous functions (e.g. \citet[Theorem 3.2]{bich2006some}), the mapping $B$ does not satisfy this either. We next study the exact nature of discontinuity.

\begin{lemma}\label{lem:discont-mapping}
Consider the best-response mapping $B$ in \eqref{eqn:argmax-equil} over convex and compact domain $\cC$. For any $\vm$, either the mapping $[B(\vm)]_i$ is convex, closed, and continuous in $\vm$, or $0 \in [B(\vm)]_i$. 
\end{lemma}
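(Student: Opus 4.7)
Write $f(m_i) := -c_i m_i$ and $g(m_i, \vm_{-i}) := \nu_i(m_i, \vm_{-i}) - c_i m_i$, so that
\[
u_i(m_i, \vm_{-i}) = \max\bigl(f(m_i),\, g(m_i, \vm_{-i})\bigr).
\]
By the hypothesis on the mechanism, $\nu_i$ is continuous in $\vm$ and concave in $m_i$, hence $g$ is continuous in $(m_i, \vm_{-i})$ and concave in $m_i$, and $u_i$ is continuous in $(m_i, \vm_{-i})$ as the maximum of two continuous functions. The key observation driving the argument is that $u_i(0, \vm_{-i}) = \max(0, \nu_i(0, \vm_{-i})) \geq 0$, so $\max_{m_i \in [0, 1/c_i]} u_i(m_i, \vm_{-i}) \geq 0$ for every $\vm_{-i}$. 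I would then split on whether this maximum is strictly positive or equals zero.

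\textbf{Case A (strict max).} When $\max_{m_i} u_i(m_i, \vm_{-i}) > 0$, I claim $[B(\vm)]_i = \argmax_{m_i \in [0,1/c_i]} g(m_i, \vm_{-i})$. Indeed, any optimizer $m^\ast$ satisfies $u_i(m^\ast) > 0 \geq f(m^\ast)$, so the max defining $u_i$ must be attained on the $g$-branch; combined with $u_i \geq g$ pointwise this gives $g(m^\ast) = u_i(m^\ast) = \max g$, and the converse inclusion is immediate once $\max g > 0$. Convexity of $[B(\vm)]_i$ then follows from concavity of $g(\cdot, \vm_{-i})$, and closedness from continuity of $g$. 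For the upper-hemicontinuity in $\vm$ (the sense of ``continuous'' needed by Kakutani, Lemma~\ref{lem:fixed}), I would invoke Berge's maximum theorem on the continuous objective $u_i$ with the (trivially continuous) constant compact constraint $[0, 1/c_i]$.

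\textbf{Case B (zero max).} When $\max_{m_i} u_i(m_i, \vm_{-i}) = 0$, the inequality $u_i(0, \vm_{-i}) \geq 0$ forces $u_i(0, \vm_{-i}) = 0$, and hence $0 \in [B(\vm)]_i$, which is exactly the alternative in the lemma.

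\textbf{Main obstacle.} The genuine difficulty is that $u_i$ is neither concave nor quasi-concave in $m_i$ because of the $\max(0, \cdot)$ clipping inherited from feasibility, so the best-response correspondence can legitimately fail both convex-valuedness and continuity simultaneously: in Case B the argmax can take the form $\{0\} \cup [a, b]$ with $a > 0$ (one maximizer at the origin, another coming from $g$ touching zero on an interval), which directly defeats Kakutani. What the lemma buys is a clean localization of the pathology: every $\vm$ at which $[B(\vm)]_i$ is non-convex or discontinuous is one at which $0$ already lies in the best response. This dichotomy is what the subsequent equilibrium-existence argument in Theorem~\ref{thm:equilibrium} will exploit, allowing one either to apply Kakutani in the ``good'' region or to read off a boundary fixed point with $m_i = 0$.
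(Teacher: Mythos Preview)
Your proposal is correct and follows essentially the same approach as the paper. Both arguments exploit that $u_i$ is the pointwise maximum of the decreasing linear branch $-c_i m_i$ and the concave branch $g=\nu_i-c_i m_i$, and both reduce the ``good'' case to the argmax of the concave branch. The paper phrases the dichotomy in terms of local maxima (either $0$ or the maximizers of $g$), whereas you split on whether $\max_{m_i} u_i$ is strictly positive or zero; these are equivalent here since $u_i(0,\vm_{-i})\geq 0$. Your treatment is somewhat more explicit---in particular the appeal to Berge's maximum theorem for upper hemicontinuity and the direct verification that $[B(\vm)]_i=\argmax_{m_i} g(m_i,\vm_{-i})$ in Case~A---while the paper's proof is a terser sketch of the same idea.
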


\begin{figure}
    \centering
    \includegraphics[width=0.65\linewidth]{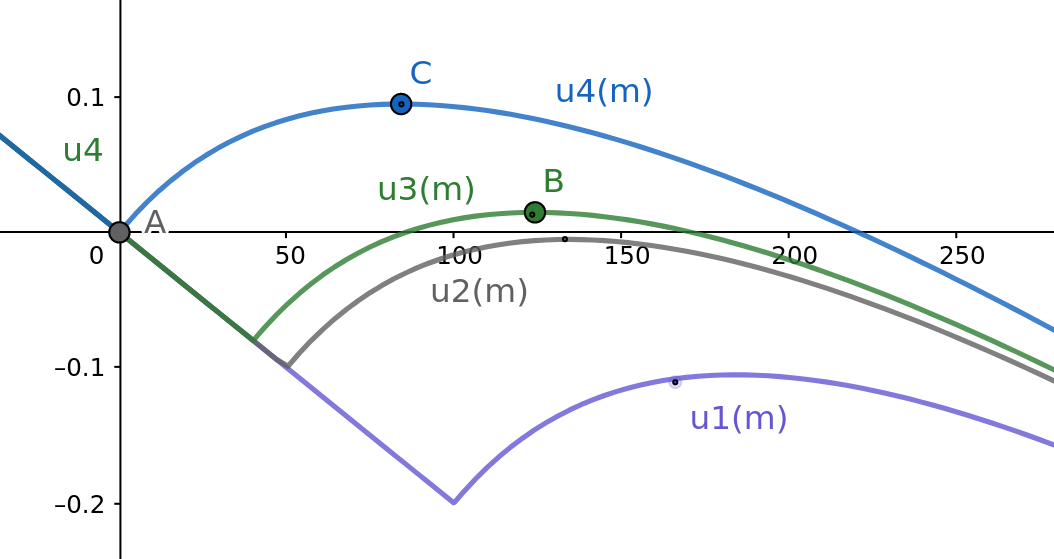}
    \caption{Utility curves $u(m_i)$ of some agent $i$, and the corresponding discontinuous best responses ($\tilde m_i \geq 0$ which maximizes $u(m_i)$). For both $u1$ and $u2$, the best response of the agent is $\tilde m_i = 0$ (point A), and points B and C are the best responses for $u3$ and $u4$. A small change in the utility curves (from $u2$ to $u3$) can result in a large change in the best response (from A to B).}
    \label{fig:continuity}
\end{figure}

\begin{proof}
Figure~\ref{fig:continuity} looks at the best response mapping $B_i$ depending on the utility curve $u_i(\cdot, \vm_{-i})$. Even if the utility itself is smoothly varying with the parameters $\vm$, the best response may be discontinuous. In Fig.~\ref{fig:continuity}, for a small change in the utility curve between $u2(m_i)$ to $u3(m_i)$, the best response drastically changes from $\tilde m_i =0$ (A) to $\tilde m_i > 125$ (B). However, this is the only source of discontinuity.

Recall that our utility function $u_i$ is a max of a decreasing linear function and a concave function. Thus it has at most two local maxima: either 0, or the maxima of the concave function $f(m_i) = \nu_i(m_i; \vm_{-i}) - c_i m_i$. The set of maxima of a continuous concave function is continuous, closed and convex. Hence, either $0$ is part of the best response, or $[B(\vm)]_i$ is continuous, closed and convex.
\end{proof}

Armed with Kakutani's fixed point theorem Lemma~\ref{lem:fixed} and a description of the discontinuities in the best response mapping Lemma~\ref{lem:discont-mapping}, we can continue with the proof of existence of a fixed point for $B$. 
Given any index set $\cI \subseteq [n]$, we can define the following sub-domain $\cC_{\cI} := \prod_{i \in \cI} [0, \nicefrac{1}{c_i}] $. Given any vector $\vp \in \cC_{\cI}$, we can construct its extension $m(\vp; \cI) \in \cC$ as 
\[
    [m(\vp; \cI)]_i := \vp_i \text{ if } i \in \cI \text{, and } 0 \text{ otherwise.}
\]
We will omit the $\cI$ dependence and use $m(\vp)$ when clear from context. Given this mapping between sub-domain $\cC_{\cI}$ and the full domain $\cC$, we can define a mapping:
\[
    B_{\cI}(\vp): \cC_{\cI} \rightarrow 2^{\cC_{\cI}} := \rbr*{[B(m(\vp))]_i \text{ for } i \in \cI}
\]
Finally, for any $\vm \in \cC$, define the set of indices $\cI(\vm) \subseteq [n]$ as
\[
    \cI(\vm) := \{i \text{ for which } 0 \notin [B(\vm)]_i \,.\}
\]
Let us start from $\vm = \mathbf{0}$. If $\cI(\mathbf{0}) = \emptyset$, we are done since this implies $\mathbf{0} \in B(\textbf{0})$. Otherwise, Lemma~\ref{lem:discont-mapping} states that the mapping $B_{\cI(\mathbf{0})}(\vp)$ over the compact convex domain $\cC_{\cI(\mathbf{0})}$ is convex, compact and continuous. Hence, by Lemma~\ref{lem:fixed}, it has a fixed point such that $ \vp^1 \in B_{\cI}(\vp^1)$. We can inductively continue applying the same argument. If $m(\vp^1)$ is a fixed point of the full mapping $B$ with $m(\vp^1) \in B(m(\vp^1))$, we are done. Otherwise, $\cI(m(\vp^1)) \supset \cI(\mathbf{0})$ and we can continue repeating the same argument inductively. Since the size of $\cI$ is at most $n$, the recursion will stop and yield a fixed point $\tilde \vm \in \cC$ such that $\tilde\vm \in B(\tilde\vm)$. As we initially proved, this fixed point $\tilde \vm$ to the best response dynamics is also the equilibrium of our mechanism.


\end{proof}

\federatedtheorem*
\begin{proof}
Let $\tilde i$ be the agent with the least cost. By Theorem~\ref{thm:ind}, it follows that $m_j^\ast \leq m_{\tilde i}^\ast$ for all agents $j \in [n]$. 

First, suppose that $m_{\tilde i}^\ast > 0$. In this setting, all other agents agent $j$ will have access to data contributed by $\tilde i$ which is $m_{\tilde i}^\ast$. Now given access to this, the marginal gain in accuracy for an additional data-point for any agent $j$ is less than their cost i.e. $b'(m_{\tilde i}^\ast) = c_{\tilde i} \leq c_{j}$. Hence, it is optimal for agent $j$ to just use $m_{\tilde i}^\ast$ data-points, and not generate any additional data-points. Thus, the equilibrium is all other agents contribute no data, and agent $\tilde i$ computes $m_{\tilde i}^\ast$ datapoints as if on its own.

Next consider the case where $m_{\tilde i}^\ast = 0$ and hence all $m_i^\ast = 0$. Suppose all the other agents in total contribute $\Delta m$ datapoints, which is given to all agents unconditionally. With this extra free data, it is possible that there exists some agent for whom $m_{\tilde i}^\ast > 0$. However, the incentives of the agents remain identical and so the agent with the least cost collects the most data. Given access to this $\Delta m + m_{\tilde i}^\ast$ amount of data, agent $j$ with higher cost $c_j \geq c_{\tilde i}$ has no incentive to collect any additional data. Hence, only agent $\tilde i$ would collect any data and so $\Delta m =0$. However, if $\Delta m =0$, agent $\tilde i$ also has no incentive to collect any data. This implies that all agents contributing $0$ datapoints is the only Nash equilibrium possible.
\end{proof}

\section{Proofs from Section~\ref{sec:accuracy_shaping} (Accuracy Shaping under Known Costs)}

\theoremoptmechanism*

\paragraph{Proof.}
We will do the proof in two steps. Consider the best response $B_{\cM}$ of the agents to a mechanism $\cM$ similar to the definition in the proof of Theorem~\ref{thm:equilibrium}:
\[
    [B_{\cM}(\vm)]_i := \argmax_{m_i \geq 0} [\cM(m_i, \vm_{-i})]_i - c_i m_i \,.
\]
We will first prove that given a fixed contribution $\vm$ from other users, the best response for our mechanism $m_i^{\max}$ is higher than that of any other feasible and IR mechanism. We will then show that this necessarily implies that the equilibrium contribution of the agent is also data maximizing.

\begin{lemma}\label{lem:best-bestresponse}
For a given data contribution $\vm$ and any feasible and IR mechanism $\tilde\cM$, define best responses $B_{\cM}(\vm)$ and $B_{\tilde\cM}(\vm)$ for our mechanism $\cM$ (defined in \eqref{eqn:opt-mechanism}) and the other mechanism $\tilde \cM$. Then, for any agent $i$ and contribution $\vm$,
\[
    [B_{\cM}(\vm)]_i \geq [B_{\tilde\cM}(\vm)]_i\,.
\]
Further, the best response $[B_{\cM}(\vm)]_i$ is non-decreasing in the net contribution from other agents $(\sum_{j \neq i}\vm_j)$.
\end{lemma}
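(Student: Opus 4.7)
The plan is to establish the two claims separately: first show the best response under $\cM$ is the single point $m_i^{\max}$; second show no feasible IR mechanism can induce a strictly larger best response in the limit $\varepsilon \to 0^+$; and third derive monotonicity of $m_i^{\max}$ in $M := \sum_{j \neq i} m_j$ via implicit differentiation.

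For Step~1, I would dissect the agent's utility $u_i(m_i) := [\cM(m_i,\vm_{-i})]_i - c_i m_i$ into the three pieces in~\eqref{eqn:opt-mechanism}. On $[0, m_i^\ast]$ the utility coincides with the individual utility $a(m_i) - c_i m_i$, maximized at $m_i^\ast$ by Theorem~\ref{thm:ind}; on $[m_i^\ast, m_i^{\max}]$ the linear substitution gives $u_i(m_i) = a(m_i^\ast) - c_i m_i^\ast + \varepsilon(m_i - m_i^\ast)$, strictly increasing in $m_i$ and hence maximized at the right endpoint $m_i^{\max}$; and on $[m_i^{\max}, \infty)$ the utility is $a(m_i + M) - c_i m_i$, whose derivative $a'(m_i + M) - c_i$ is non-positive because the defining equation of $m_i^{\max}$ forces the concave curve $a(\cdot + M)$ to cross the linear target (of slope $c_i + \varepsilon$) from above at $m_i^{\max}$, so by concavity $a'(m + M) \leq c_i$ past that point in the limit $\varepsilon \to 0^+$. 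Matching boundary values across the three pieces then identifies $m_i^{\max}$ as the unique maximizer.

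For Step~2, I would take any feasible IR mechanism $\tilde\cM$ with best response $\tilde m$ and compare against the alternative contribution $m' = m_i^\ast$: chaining optimality of $\tilde m$, IR of $\tilde\cM$ at $m_i^\ast$ (giving $[\tilde\cM(m_i^\ast,\vm_{-i})]_i \geq a(m_i^\ast)$), and feasibility of $\tilde\cM$ at $\tilde m$ (giving $[\tilde\cM(\tilde m,\vm_{-i})]_i \leq a(\tilde m + M)$) yields
\[
    a(\tilde m + M) - a(m_i^\ast) \geq c_i (\tilde m - m_i^\ast) .
\]
Setting $\phi(m) := a(m + M) - a(m_i^\ast) - c_i(m - m_i^\ast)$, the defining equation of $m_i^{\max}$ gives $\phi(m_i^{\max}) = \varepsilon(m_i^{\max} - m_i^\ast) \to 0$ as $\varepsilon \to 0^+$, and concavity of $a$ makes $\phi$ concave and non-increasing past $m_i^{\max}$. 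Hence in the limit the constraint $\phi(\tilde m) \geq 0$ forces $\tilde m \leq m_i^{\max}$, establishing $[B_\cM(\vm)]_i \geq [B_{\tilde\cM}(\vm)]_i$. Step~3 (monotonicity in $M$) then follows by implicit differentiation of $a(m_i^{\max}+M) = a(m_i^\ast) + (c_i+\varepsilon)(m_i^{\max}-m_i^\ast)$, yielding
\[
    \frac{d m_i^{\max}}{dM} = \frac{a'(m_i^{\max}+M)}{(c_i+\varepsilon) - a'(m_i^{\max}+M)} \geq 0,
\]
since the denominator is strictly positive because the linear curve exceeds $a$ in slope at the intersection; equivalently, as $M$ grows, the curve $a(\cdot + M)$ shifts upward while the linear target is fixed, so the intersection moves right.

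The hard part will be Step~2, because the $\varepsilon > 0$ slack only directly produces the weaker bound $\tilde m \leq m_i^{\max}(\varepsilon) + O(\varepsilon)$: for any fixed $\varepsilon$ one can construct a competitor mechanism with linear slope $c_i + \delta$ for $0 < \delta < \varepsilon$ whose best response strictly exceeds $m_i^{\max}(\varepsilon)$, so the tight comparison only holds after passing to the $\varepsilon \to 0^+$ limit used by Theorem~\ref{thm:opt-mechanism}. This requires invoking continuity of $m_i^{\max}(\varepsilon)$ in $\varepsilon$, which follows from strict concavity of $a$ in the relevant region; a minor secondary issue is handling flat segments of $a$ that would make $m_i^{\max}$ non-unique, resolved by always selecting the largest root of the defining equation. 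Steps~1 and~3 then reduce to routine case analysis and comparative statics.
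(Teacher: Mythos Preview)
Your proposal is correct and Step~1 matches the paper's argument closely. Steps~2 and~3 take a somewhat different route that is worth noting.

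For Step~2 (the key data-maximality claim), the paper argues by contradiction via a \emph{derivative comparison}: it asserts that if $\tilde m_i > m_i^{\max}$ is the best response under $\tilde\cM$, then $u_i'(m_i;\tilde\cM) > 0$ for all $m_i \leq \tilde m_i$, hence $\partial_{m_i}[\tilde\cM(\vm)]_i > c_i \geq \partial_{m_i}[\cM(\vm)]_i$ on $[m_i^\ast, m_i^{\max}]$ (the last inequality only in the $\varepsilon \to 0^+$ limit); integrating from the IR anchor at $m_i^\ast$ forces $[\tilde\cM(\vm)]_i > a(\sum_j m_j)$ at $m_i^{\max}$, contradicting feasibility. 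Your argument instead chains the three inequalities (optimality at $\tilde m$ versus $m_i^\ast$, IR at $m_i^\ast$, feasibility at $\tilde m$) directly to get $\phi(\tilde m) \geq 0$ for the concave $\phi(m) := a(m+M) - a(m_i^\ast) - c_i(m - m_i^\ast)$, and then reads off $\tilde m \leq m_i^{\max}(0)$ from the location of the zero of $\phi$. Your route is arguably cleaner: it avoids the paper's implicit assumption that the competing mechanism's utility is monotone below its argmax (which does not follow for an arbitrary feasible IR $\tilde\cM$), and it makes transparent that the comparison is really against $m_i^{\max}(0) = \lim_{\varepsilon \to 0^+} m_i^{\max}(\varepsilon)$, exactly the limit invoked in Theorem~\ref{thm:opt-mechanism}. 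The paper's route, on the other hand, makes the geometric picture of Fig.~\ref{fig:accuracy-shaping-known-costs} explicit (the competing curve would have to lie above $\cM$ all the way to the feasibility envelope).

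For Step~3, the paper simply remarks that monotonicity ``follows directly from the definition of $\cM$'' and the monotonicity of $a$; your implicit-differentiation computation is a more explicit version of the same observation and is fine.
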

For now, we will assume that the above lemma and continue with our proof. As shown in the proof of Theorem~\ref{thm:equilibrium}, the equilibrium of all feasible mechanisms (if they exist) lie in the range $\cC := \prod_i [0, \nicefrac{1}{c_i}]$. Suppose that $\tilde\vm \in \cM$ is the equilibrium of mechanism $\tilde \cM \in \cM$. Note that $\tilde\vm$ is also the fixed point of the best response with $\tilde\vm \in B_{\tilde\cM}(\tilde\vm)$. Now, define the following subspace 
\[
    \cC_{\geq \tilde\vm} := \prod_j [\tilde \vm_j, \nicefrac{1}{c_j}]\,.
\]
The set $\cC_{\geq \tilde\vm}$ is compact and convex. Thus, we can apply Theorem~\ref{thm:equilibrium} to our optimal mechanism $\cM$ to prove that there exists an equilibrium point $\vm \in \cC_{\geq \tilde\vm}$ such that 
\[
    [\vm]_i \in \argmax_{m_i \geq \tilde m_i} [\cM(m_i, \vm_{-i})]_i - c_i m_i\,.
\]
We will next show that the above point $\vm$ is in fact a fixed of $B_{\cM}(\vm)$ and satisfies:
\[
    [\vm]_i \in \argmax_{m_i \geq 0} [\cM(m_i, \vm_{-i})]_i - c_i m_i\,.
\]
Note that the only difference between the two claims is that in the latter the $\argmax$ is taken over $\geq 0$ where as it was more constrained in the former. For the sake of contradiction, suppose this is not true i.e. there exists an agent $i$ such that $\vm_i \notin [B_{\cM}(\vm)]_i$ and $[B_{\cM}(\vm)]_i < \tilde \vm_i$. However, this leads to a contradiction:
\begin{align*}
    &\sum_{j \neq i} \vm_j \geq \sum_{j \neq i} \tilde\vm_j\\
    \Rightarrow &[B_{\cM}(\vm)]_i \geq [B_{\cM}(\tilde\vm)]_i \geq [B_{\tilde\cM}(\tilde\vm)]_i = \tilde\vm_i\,.
\end{align*}
The first inequality is because $\vm \in \cC_{\geq \tilde\vm}$. The first inequality in the second step follows from the latter part of Lemma~\ref{lem:best-bestresponse} while the next inequality is from the first part. Finally, the last equality follows because $\tilde \vm$ is a fixed point of $B_{\tilde\cM}$. Hence, we have proven that there exists a fixed point $\vm \in B_{\cM}(\vm)$ such that $\vm \in \cC_{\geq \tilde\vm}$ i.e. the equilibrium contribution of every agent under $\cM$ is at least as much as $\tilde \cM$.
\qedsymbol

\paragraph{Proof of Lemma~\ref{lem:best-bestresponse}.}
Recall the optimal mechanism defined in \eqref{eqn:opt-mechanism} restated below:
    \begin{equation}
        [\cM(\vm)]_i = 
            \begin{cases}
                a(m_i) \quad &\text{for } m_i \leq  m_i^\ast \\
                a(m_i^\ast) + (c_i + \varepsilon)(m_i - m_i^\ast)  \quad &\text{for } m_i \in [m_i^\ast, m_i^{\max}] \\
                a(\sum_j m_j)  \quad &\text{for } m_i \geq m_i^{\max} \,.
            \end{cases}
    \end{equation}
For now, suppose that $m_i^\ast > 0$. Recall, from [case 3, Theorem~\ref{thm:ind}], that this implies $a'(m_i^\ast ) = b'(m_i^\ast ) = c_i$.

First we show that $m_i^{\max}$ is the unique equilibrium contribution for an agent $i$.  The slope of the utility of agent $i$ is
\[
    u_i'(m_i; \cM) = \frac{\partial [\cM(\vm)]_i}{\partial m_i} - c_i\,.
\]
By construction, this slope is $u_i'(m_i; \cM) > 0$ for any $m_i < m_i^{\max}$. Suppose the contribution of all other agents is fixed to $\Delta m_{\text{-}i} = \sum_{j \neq i}m_j$. The slope of the utility at $m_i^{\max}$ is $u_i'(m_i^{\max}; \cM) = a'(m_i^{\max} + \Delta m_{\text{-}i}) - c_i$. Again, by construction, $m_i^{\max} + \Delta m_{\text{-}i} \geq m_i^\ast$. Since $b$ is concave and $b'$ is non-increasing,
\[
    u_i'(m_i^{\max}; \cM) = a'(m_i^{\max} + \Delta m_{\text{-}i}) - c_i = b'(m_i^{\max} + \Delta m_{\text{-}i}) - c_i \leq b'(m_i^\ast) - c_i = 0\,.
\]
Thus, $m_i^{\max}$ is the unique equilibrium contribution of agent $i$.
Next, we have to demonstrate the data-maximizing property. For the sake of contradiction, suppose there existed some other mechanism $\tilde \cM$ such that
\[
    \argmax_{m_i}[\tilde \cM(\vm)]_i - c_i m_i =: \tilde m_i > m_i^{\max}\,.
\]
This implies that $u_i'(m_i; \tilde\cM) > 0$ for any $m_i \leq \tilde m_i$,  i.e. $\frac{\partial [\tilde\cM(\vm)]_i}{\partial m_i} > c_i$. In particular, this implies that
\[
\frac{\partial [\tilde\cM(\vm)]_i}{\partial m_i} > \frac{\partial [\cM(\vm)]_i}{\partial m_i}\text{ for all } m_i \in [m_i^\ast, m_i^{\max}]\,.
\]
Further, $\tilde \cM$ satisfies individual rationality and so at $m_i = m_i^\ast$ we have 
\[
[\tilde\cM(m_i^\ast, \vm_{\text{-}i})]_i \geq a( m_i^\ast) = [\cM(m_i^\ast, \vm_{\text{-}i})]_i\,.
\]
Together, these two conditions imply that for all $m_i \in [m_i^\ast, m_i^{\max}]$, we have $[\tilde\cM(\vm)]_i > [\cM(\vm)]_i$. In particular at $m_i = m_i^{\max}$, we have
\[
[\tilde\cM(m_i^{\max}, \vm_{\text{-}i})]_i > a(\textstyle\sum_j m_j)\,.
\]
This gives us a contradiction since it violates feasibility. Thus, $m_i^{\max}$ is the maximum data which can be extracted from agent $i$.

The proofs for the low and medium cost agents are similar, while noting that $m_i^\ast = 0$. This finishes the proof of the first part. The second  part of the lemma follows directly from the definition of $\cM$ and the fact that the accuracy function $a(m_i + \sum_{j\neq i}m_j)$ is non-decreasing in the contributions $\sum_{j\neq i}m_j$. \qedsymbol

\theoremunchangedutilities*
\begin{proof}
This statement is true by construction of our mechanism. When $\e \rightarrow 0$, the slope of the utility becomes
\[
    u_i'(m_i; \tilde\cM) = \frac{\partial [\cM(\vm)]_i}{\partial m_i} - c_i = c_i + \e - c_i = 0 \text{ for all } m_i \in [m_i^\ast, m_i^{\max}]\,.
\]
Further, note that at $m_i = m_i^\ast$, we have $[\cM(m_i^\ast, \vm_{\text{-}i})]_i = a(m_i^\ast)$. Thus, for all $m_i \in [m_i^\ast, m_i^{\max}]$, the utility of agent $i$ with our mechanism $\cM$ remains constant and equal to the optimal individual utility $u_i(m_i^\ast)$.
\end{proof}

\section{Proofs from Appendix~\ref{sec:information_rent} (Data Maximization with Unverifiable Costs)}

\theoremunknownoptmechanism*
\begin{proof}
Recall that we had defined the mechanism \eqref{eqn:unknonw-opt-mechanism} as
    \begin{equation}
        [\cM(\vm)]_i = 
            \begin{cases}
                a(m_i) \quad &\text{for } m_i \leq  \high m^\ast \\
                a(\high m^\ast) + (\high c + \varepsilon)(m_i - \high m^\ast)  \quad &\text{for } m_i \in [\high m^\ast, m_i^\uparrow] \\
                a(m_i^{\downarrow} + \sum_{j\neq i} m_j)  - (\low c + \varepsilon)(m_i^\downarrow - m_i)  \quad &\text{for } m_i \in [m_i^\uparrow, m_i^\downarrow] \\
                a(\sum_j m_j)  \quad &\text{for } m_i \geq m_i^\downarrow \,.
            \end{cases}
    \end{equation}
First, we have to show that $m_i^\uparrow$ and $m_i^\downarrow$ are equilibrium for the high and low cost players $\high c$ and $\low c$ respectively. For the sake of simplicity, we first assume that $\high m^\ast > 0$ and $\low m^\ast > 0$. The proofs directly extend to the other cases. Now, note that $a'(\high m^\ast) = b'(\high m^\ast) = \high c$. Thus, by constructions, we have that for a high cost agent,
\[
    u_i'(m_i; \cM) = \frac{\partial [\cM(\vm)]_i}{\partial m_i} - \high c > 0 \text{ for all } m_i \leq m_i^\uparrow\,.
\]
where as for $m_i > m_i^\uparrow$, the slope $u_i'(m_i; \cM) = \low c + \e - \high c < 0$. Assuming $\e$ is small enough, a high cost agent obtains optimal utility at $m_i^\uparrow$. Similarly, for the low cost agent, $ u_i'(m_i; \cM)  > 0$ for all $m_i < m_i^\downarrow$ and is negative after (similar to Theorem~\ref{thm:opt-mechanism}). Thus, the optimum contribution of the low cost player is $m_i^\downarrow$. 

Next, recall that we had defined in \eqref{eqn:unknown-compromise} that $m_i^\downarrow$ satisfies
    \begin{equation}\label{eqn:unknown-compromise-app}
        m_i^\downarrow  = \min\rbr[\Big]{\max\rbr[\Big]{ b'^{-1}\rbr*{\low c - \tfrac{p_i }{1-p_i}\high c} - \Delta m_{\text{-}i} \ ,\  \high m^{\max}} \,,\, \low m^{\max}}\ .
    \end{equation} 
    We will show that $m_i^\downarrow$ defined this way maximizes the expected data for agent $i$:
    \begin{equation}\label{eqn:app-unknonw}
        \max_{m_i^\downarrow}\cbr*{ (1-p_i)m_i^\uparrow + p_i m_i^\downarrow } \text{ subject to }m_i^\uparrow, m_i^\downarrow  \text{ are feasible for }\cM\,.
    \end{equation}

\begin{figure}
    \centering
    \includegraphics[width=0.8\linewidth]{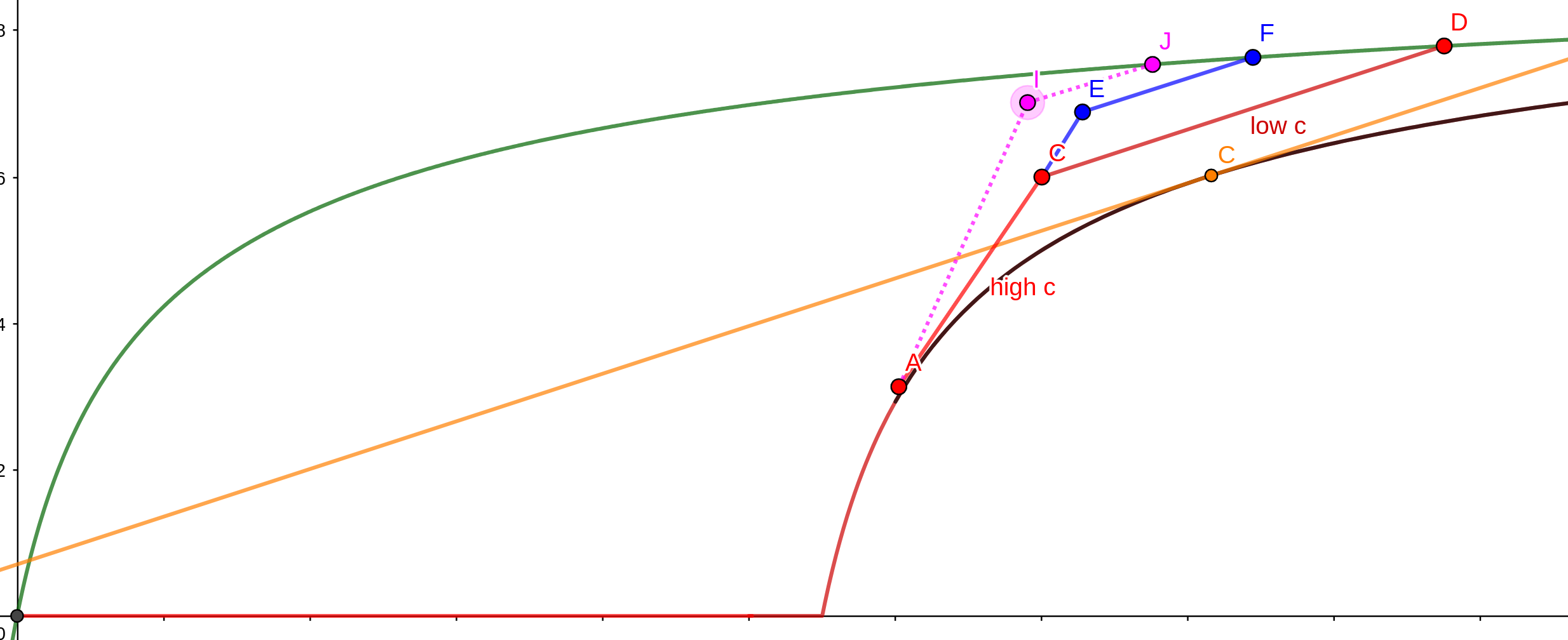}
    \caption{Accuracy shaping mechanism under unknown costs. \textit{(red curve)}: model accuracy returned to agent $i$ by the mechanism; \textit{(grey curve)}: model accuracy for agent $i$ without participation; \textit{(green curve)}: model accuracy if agent $i$ receives all the data from the other agents. Point C and D (in red) and points E and F (in blue) represent two different choices for ($m^\uparrow$ and $m^\downarrow$) respectively. If we choose a smaller value of $m^\downarrow$ (shown in blue by point F), we would see an increase in $m^\uparrow$ to point E. Thus, the optimum value balances these two depending on the probability $p_i$. Finally, points I and J (in magenta) represent potential other mechanisms $\tilde \cM$.}
    \label{fig:opt- accuracy-shaping-unknown-costs}
\end{figure}

    This involves some variational calculus (see Fig.~\ref{fig:opt- accuracy-shaping-unknown-costs}). As shown in Fig.~\ref{fig:opt- accuracy-shaping-unknown-costs}, reducing the value of $m_i^\downarrow$ results in an increase in $m_i^\uparrow$. Suppose we push the blue bar vertically by a small value $dx$. Because the slope of AC is $\high c$, this results in increase of $ \frac{dx}{\high c}$ in $m_i^\uparrow$. Correspondingly, we can show that the decrease in $m_i^\downarrow$ will be $\frac{dx}{\low c - a'(m_i^\downarrow + \Delta m_{\text{-}i})}$. Putting these together, the net expected change in data contribution is
    \[
        (1-p_i)\frac{dx}{\high c} - \frac{dx}{\low c - b'(m_i^\downarrow + \Delta m_{\text{-}i})}\,.
    \]
    The local unconstrained maxima can then be derived by setting the above to 0 i.e when
    \[
        b'(m_i^\downarrow + \Delta m_{\text{-}i}) = \low c - \tfrac{p_i }{1-p_i}\high c\,.
    \]
    Of course, we have to respect the constraints that $m_i^\downarrow \in [\high m^{\max}, \low m^{\max}]$ giving us our final result. Thus, the value of $m_i^\downarrow$ as chosen by \eqref{eqn:app-unknonw} is optimal for these class of mechanisms. 
    
    Now, we have to show that any data-maximizing mechanism corresponds to $\cM$ with some choice of $m_i^\downarrow$. Consider a mechanism $\tilde \cM$ whose equilibrium contributions are ($\tilde{m}$, $\underaccent{\tilde}{m}$) for a high and low-cost agent respectively (see points I and J in Fig.~\ref{fig:opt- accuracy-shaping-unknown-costs}). Now, from the optimality of $\low m^{\max}$, we know that $\underaccent{\tilde}{m} \leq \low m^{\max}$. Let us connect $\high m^\ast$ (point A) to $\tilde{m}$ (point I) and then to $\underaccent{\tilde}{m}$ (point J). Recall that we assumed that $\tilde \cM$ is different from $\cM$ in  \eqref{eqn:unknonw-opt-mechanism}. This means that the slope AI $\neq \high c$ or IJ $\neq \low c$. Consider the latter. Combined with I and J corresponding to equilibria, we have slope of IJ $> \low c$. This implies that starting from point I, we could have instead drawn a line segment of slope $\low c$ and increased the data contribution by the low cost agent, while keeping the contribution of the high-cost agent fixed. Similarly, we can show that the optimal slope for AI is $\high c$. Together, this implies that any optimal mechanism $\cM$ must be of the form \eqref{eqn:unknonw-opt-mechanism}, finishing our proof.
\end{proof}

\theoreminformationrent*
\begin{proof}
For a high cost player, the statement easily follows since $\frac{\partial [\cM(\vm)]_i}{\partial m_i} - \high c = 0$ for all $m_i \in [\high c^\ast, c_i^\uparrow]$. Thus, a high cost player's utility remains constant during this period and is equal to utility at $m_i = \high m^\ast$ which is $a(\high m^\ast) - \high c \high m^\ast $.

For a low cost agent, $\frac{\partial [\cM(\vm)]_i}{\partial m_i} - \high c = 0$ for all $m_i \in [m_i^\uparrow, m_i^\downarrow]$, and hence their utility is constant in this region. In particular, the difference in utility with mechanism $\cM$ and alone is
\[
a(m_i^\downarrow + \Delta m_{\text{-}i}) - \low c m_i^\downarrow - a(\low m^{\max} + \Delta m_{\text{-}i}) + \low c \low m^{\max}\,.
\]
The above quantity is always non-negative since $a'(m_i + \Delta m_{\text{-}i}) \leq \low c$ for all $m_i \in [m_i^\downarrow, \low m^{\max}]$.
\end{proof}

\end{document}